\newtheorem{exmp}{Example}[section]
\theoremstyle{definition}
\newtheorem{theorem}{Theorem}
\newtheorem{lemma}{Lemma}
\newtheorem{corollary}{Corollary}
\DeclareMathOperator*{\argmax}{arg\,max}
\begin{document}
%
\title{On the Uniqueness of Binary Quantizers for Maximizing Mutual Information}
%
%
%

\author{Thuan Nguyen and Thinh Nguyen, \textit{Senior Member, IEEE}
\thanks{Thuan Nguyen is with the School of  Electrical Engineering and Computer Science, Oregon State University, Oregon, OR,
97331 USA, e-mail: (nguyeth9@oregonstate.edu).}
\thanks{Thinh Nguyen is with the School of  Electrical Engineering and Computer Science, Oregon State University, Oregon, OR,
97331 USA, e-mail: (thinhq@eecs.oregonstate.edu).}}

%
%

\markboth{IEEE Transactions on Communications}%
{Submitted paper}
%



\maketitle

\begin{abstract}
We consider a channel with a binary input $X$ being corrupted by a continuous-valued noise that results in a continuous-valued output $Y$.  An optimal binary quantizer is used to quantize the continuous-valued output $Y$ to the final binary output $Z$ to maximize the mutual information $I(X; Z)$. We show that when the ratio of the channel conditional density $r(y) = \frac{P(Y=y|X=0)}{P(Y = y|X=1)}$ is a strictly increasing/decreasing function of $y$, then a quantizer having a single threshold can maximize mutual information.   
Furthermore, we show that an optimal quantizer (possibly with multiple thresholds) is the one with the thresholding vector whose elements are all the solutions of $r(y)=r^*$ for some constant $r^*>0$. Interestingly, the optimal constant $r^*$ is unique.  This uniqueness property allows for fast algorithmic implementation such as a bisection algorithm to find the optimal quantizer.  Our results also confirm some previous results using alternative elementary proofs.  We show some  numerical examples of applying our results to channels with additive Gaussian noises.

\end{abstract}

\begin{IEEEkeywords}
Channel quantization, mutual information, threshold, optimization.
\end{IEEEkeywords}

%
\IEEEpeerreviewmaketitle

\section{Introduction}

Quantization techniques play a vital role in signal processing, communication, and information theory. A classical quantization technique maps  a given real number to an element in a given finite discrete set that minimizes/maximizes a certain objective.  Vector quantization (VC) extends the classical quantization to allow the input to take on the real-valued vector \cite{lloyd1982least, gersho2012vector}. In compression, quantization is often used to minimize the distortion (e.g. mean square error (MSE)) between the original data and its quantized version \cite{goldberg1986image, gong2014compressing}.  In graphics, color quantization is used to reduce the number of colors in the images for displays with various capabilities \cite{akarun1997adaptive}.  In communication, quantization is often used to minimize the decoding errors.  Broadly, any conversion of a high-resolution signal to a low-resolution signal requires quantization. In this paper, we consider the quantization in the context of a communication channel where the transmitted binary signal is corrupted by a continuous noise, resulting in a continuous-valued signal at the receiver.  To recover the transmitted signal, the receiver performs a quantization algorithm that maps the received continuous-valued signal to the quantized signal such that the objective function between the input and the quantized output is maximized/minimized.   There is a rich literature on quantizer design that minimizes various objectives. One popular objective is to minimize the average decoding error.  Another fundamental objective is to maximize the mutual information between the discrete transmitted inputs and the quantized outputs. Equivalently, this objective  minimizes the information loss between the inputs and the outputs, and is related to the capacity of the channel. Specifically, for a given discrete memoryless channel (DMC) specified by a channel matrix $M$, its capacity is found by maximizing the mutual information between the input and the output with respect to the input distribution $p$ \cite{cover2012elements}, \cite{nguyen2018closed}.  On the other hand,  our work is focused on maximizing the mutual information with respect to the quantization parameters, i.e, it is equivalent to designing  a channel matrix $M$  for a fixed distribution $p$ that maximizes the capacity.  This situation often arises in real-world scenarios where the distribution of input is already given. In addition, many recent works have proposed to use quantization strategies that maximize the mutual information in the designs of low density parity check codes (LPDC) \cite{romero2015decoding, wang2011soft} and polar codes \cite{tal2011construct}.  

We consider a channel with binary input $X$ that is corrupted by a given continuous noise to produce continuous-valued output $Y$.  An optimal binary quantizer is then used to quantize the continuous-valued output $Y$ to the final binary output $Z$ to maximize the mutual information $I(X;Z)$.  We show that when the ratio of the channel conditional density $r(y) = \frac{P(Y=y|X=0)}{P(Y = y|X=1)}$ is a strictly increasing/decreasing function of $y$, then a quantizer having a single threshold can maximize mutual information.   
	Furthermore, we show that an optimal quantizer (possibly with multiple thresholds) is the one with the thresholding vector whose elements are all the solutions of $r(y)=r^*$ for some constant $r^*>0$. Interestingly, the optimal constant $r^*$ is unique.  This uniqueness property allows for fast algorithmic implementation such as a bisection algorithm to find the optimal quantizer.  Our results also confirm some previous results using alternative elementary proofs.

The outline of the paper is as follows.  First, we discuss a few related works in Section \ref{sec:related work}.  In Section \ref{sec:problem description}, we formulate the problem of designing the optimal quantizer that maximizes the mutual information.  In Section \ref{sec:structure}, we describe the structure of optimal quantizers.  In Section \ref{sec:unique}, we show the existence of a unique optimal quantizer  for a given number of thresholds, and provide an efficient algorithm for finding it.  Finally, we provide examples and numerical results to verify our contributions in Section \ref{sec:simulations}.

\section{Related Work}
\label{sec:related work}

Research on quantization techniques has a long history, including many earliest works in 1960s \cite{max1960quantizing} that aim to minimize the distortion between the original signal and the quantized signal.  
From a communication perspective, designing the quantizers that maximize the information capacity for Gaussian channels have also been proposed in 1970s \cite{smith1971information}. Recently, in constructing efficient codes such as LDPC and polar codes, a number of works have made use of quantizers that maximize the mutual information  \cite{romero2015decoding, wang2011soft, tal2011construct}.  Many advanced quantization algorithms have also been proposed to maximize the mutual information between the input and the quantized output over the past decade \cite{kurkoski2014quantization}, \cite{mathar2013threshold}, \cite{sakai2014suboptimal}, \cite{iwata2014quantizer}, \cite{winkelbauer2013channel}, \cite{koch2013low}. In \cite{kurkoski2014quantization}, the channel is assumed to have discrete input and discrete output,  and the optimal quantizers can be found efficiently using dynamic programming that has polynomial time complexity  \cite{DBLP:journals/corr/abs-1901-01659}.  On the other hand, we study the channels with discrete binary inputs and continuous-valued outputs which are then quantized to binary outputs. The continuous-valued output is a direct result of the channel conditional density.  We note that it is possible to first discretize the continuous-valued output, then use the existing quantization algorithms for the discrete input-discrete output channels \cite{kurkoski2014quantization}.  However, in many scenarios, this may result in loss of efficiencies.  In particular, many analytical and computational techniques for dealing with continuous-valued functions are more efficient than their discrete counterparts. 

Our work is also related to the classification problem in learning theory. Brushtein et al. gave the condition on the existence of an optimal quantizer which minimizes the impurity of partitions \cite{burshtein1992minimum}. Because of the similarity between maximizing mutual information and minimizing conditional entropy function \cite{kurkoski2014quantization}, \cite{kurkoski2017single}, the result in \cite{burshtein1992minimum} can be applied for finding the optimal quantizer.  A similar result also can be found in \cite{coppersmith1999partitioning}. In \cite{zhang2016low},  Zhang et al. show that finding an optimal quantizer is equivalent to finding an optimal clustering. Therefore,  a locally optimal solution can be found using k-means algorithm with the Kullback-Leibler divergence as the distance metric. Recently, there have also been many works on approximating the optimal clustering that minimize the impurity function for high dimensional data \cite{nazer2017information}, \cite{laber2018binary}, \cite{cicalese2019new}.

There are also works on designing quantizers that maximize the channel capacity by maximizing the mutual information over both quantization parameters and the input probability mass function (pmf).  This problem remains to be a hard problem \cite{mathar2013threshold},  \cite{kurkoski2017single}, \cite{nguyen2018capacities}, \cite{alirezaei2015optimum}, \cite{singh2009limits}, \cite{mumey2003optimal}.  Although the mutual information is a convex function in the input pmf, it is not a convex function in the quantization parameters.  As such, many successful convex optimization techniques for finding the optimal solution are not applicable. In \cite{nguyen2018capacities}, a heuristic near optimal quantization algorithm is proposed.  However, the algorithm only works well when the SNR ratio is high. In \cite{mathar2013threshold}, R. Mathar et al. investigate an optimal quantization strategy for binary input-multiple output channels using two support points. These results are only applicable to approximate the optimal point between two supporting points. On the other hand, Kurkoski et al. solve the optimal quantization for a discrete input-discrete output binary channel using a backward channel \cite{kurkoski2017single}.  This technique can find the optimal quantizer with the complexity $O(N)$ using an exhaustive search, where $N$ is the number of discretized levels. This work also shows that there exists an optimal quantizer using only a single threshold if the conditional noise density satisfies the log-likelihood condition.  On the other hand, our work describes the generalized conditions for the existence of a single threshold optimal quantizer together with the uniqueness property that allows for designing fast algorithms to find an optimal quantizer.

\section{Problem description}
\label{sec:problem description}
We consider the channel shown in Fig. \ref{fig:setup} where the binary signals $ x \in X=\{0,1\}$ are transmitted and corrupted by a continuous noise source to produce a continuous-valued output $y \in \mathbb{R}$ at the receiver. Specifically, $y$ is specified by the a channel conditional density $p(y|x)$.  $p(y|x)$ models the distortion caused by noise. The receiver recovers the original binary signal $x$ by decoding the received continuous-valued signal $y$ to $z \in Z=\{0,1\}$ using a quantizer $Q$. Since $y\in \mathbb{R}$, the quantization parameters can be specified by a thresholding vector 
$$\textbf{h}= (h_1, h_2, \dots,h_n) \in \mathbb{R}^n,$$
with $h_1 < h_2 < \dots < h_{n-1} < h_n$, where $n$ is assumed a finite number.  
Theoretically, it might be perceivably possible to construct the conditional densities $p(y|x_0)$ and $p(y|x_1)$ such that the optimal quantizer might consist an infinite number of thresholds.  On the other hand, for a practical implementation, especially when the quantizer is implemented using a lookup table, then a finite number of thresholds must be used.  To that end, the optimal quantizer in this paper refers to the best quantizer in the class of all quantizers with a finite number of thresholds.

\begin{figure}
	\centering
	\includegraphics[width=0.7\linewidth]{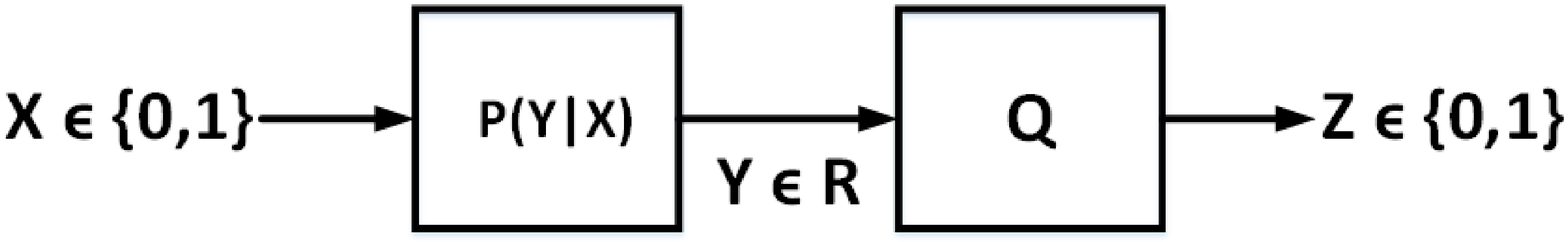}
	\caption{Channel model: binary input $X$ is corrupted by continuous noise to result in continuous-valued $Y$ at the receiver. The receiver attempts to recover $X$ by quantizing $Y$ into binary signal $Z$.}
	\label{fig:setup}
\end{figure}

In particular, $\textbf{h}$ induces $n+1$ disjoint partitions: $$H_1 = (-\infty, h_1), H_2 = [h_1, h_2), \dots, H_n = [h_{n-1}, h_n), H_{n+1} = [h_n,\infty).$$ 
Let $\mathbb{H} = \bigcup_{i \in odd} H_i$ and  $\bar{\mathbb{H}} = \bigcup_{i \in even} H_i$, then $\mathbb{H} \cap \bar{\mathbb{H}} =\emptyset$ and $\mathbb{H} \cup \bar{\mathbb{H}} =\mathbb{R}$. 

The receiver uses a quantizer $Q:Y \rightarrow Z$ to quantize $Y$ to $Z$ as:
\begin{equation}
 \label{eq:decoding}
Z = \begin{cases} 
	0  & \text{if  } Y \in \mathbb{H}, \\
	1  & \text{if  }  Y \in \bar{\mathbb{H}}.
\end{cases}
\end{equation}

Note that we can also switch the rule such that $Q$ quantizes $Y$ to $Z=1$ if $y \in \mathbb{H}$ and quantizes $Y$ to $Z=0$ if $y \in \bar{\mathbb{H}}$.   The main point is that $\textbf{h}$ divides $\mathbb{R}$ into $n+1$ contiguous disjoint segments, each maps to either 0 or 1 alternatively.
Our goal is to design an optimal quantizer $Q^*$, specifically $\textbf{h}^*$ that maximizes the mutual information $I(X;Z)$ between the input $X$ and the quantized output $Z$:
\begin{equation}
\label{eq:maximization}
\textbf{h}^*= \argmax_{\bf{h}} I(X;Z).
\end{equation} 
We note that both the values of thresholds $h_i$'s and the number of thresholds $n$ are the optimization variables.
The maximization in (\ref{eq:maximization}) assumes that the input probability mass function $p(x)$ and the channel conditional density $p(y|x)$ are given. 

\section{Optimal Quantizer Structure}
\label{sec:structure}

For convenience, we use the following notations:

\begin{enumerate}
	\item $\textbf{p} = (p_0,  p_1)$  denotes the probability mass function for the input $X$, with $p_0 =  P(X=0)$ and $p_1 = P(X=1)$.
	\item $\textbf{q} = (q_0, q_1)$ denotes probability mass function for the output $Z$, with  $q_0 =  P(Z=0)$ and $q_1 = P(Z=1)$.
	\item  $\phi_0(y)=p(y|x=0)$ and $\phi_1(y)=p(y|x=1)$ denote conditional density functions of the received signal $Y$ given the input signal $X = 0$ and $X=1$, respectively.
\end{enumerate}

Furthermore, we make two following assumptions:\\
{\bf Assumptions:}	
\begin{enumerate}
	\item  $r(y)=\dfrac{\phi_0(y)}{\phi_1(y)}$ will play a central role this paper.  All the results in this paper assume that $r(y)$ is a continuous function, and has a finite number of stationary points. Equivalently, $r(y) = r'$ has a finite number of solutions for any constant $r' > 0$.  Note that this assumption will hold for most $\phi_0(y)$ and $\phi_1(y)$.	
	\item Both $\phi_0(y)$ and $\phi_1(y)$ are differentiable everywhere.
 \end{enumerate}

Using the notations and the assumptions above, a 2$\times2$ channel matrix $A$ associated with a discrete memoryless channel (DMC) with input $X$ and output $Z$ is:
\[
A=
\begin{bmatrix}
A_{11} & 1-A_{11}\\
1-A_{22} & A_{22}
\end{bmatrix},
\]

where
\begin{equation}
\label{eq: definition a11}
A_{11} =\int_{y \in \mathbb{H}}^{}\phi_0(y)dy,
\end{equation}
\begin{equation}
\label{eq: definition a22}
A_{22} =\int_{y \in \bar{\mathbb{H}}}^{}\phi_1(y)dy.
\end{equation}

The simplest quantizer (decoding scheme) uses only a single threshold to quantize a continuous received signal into binary outputs.  Specifically, 

$$Z = 
\begin{cases}
 	 0 & \text{if } Y < h_1, \\
	 1 & \text{otherwise.}
\end{cases}
$$
In general, this quantizer is not optimal, i.e., does not maximize the mutual information $I(X;Z)$.  Using the results of Burnstein et al.  \cite{burshtein1992minimum},   Kurkoski et al. \cite{kurkoski2017single} showed a sufficient condition on $p(y|x)$ for which the single threshold quantizer is indeed an optimal quantizer.  Our first contribution is to show that the optimal binary quantizer with multiple thresholds, specified by a thresholding vector $\textbf{h}^*=(h_1^*,h_2^*,\dots,h_n^*)$ with $h_i^* < h_{i+1}^*$, must satisfy the conditions stated in the Theorem \ref{theorem: 1}.

\begin{theorem}
\label{theorem: 1}
Let $\textbf{h}^*=(h_1^*,\dots,h_n^*)$ be a thresholding vector of an optimal quantizer $Q^*$, then:
\begin{equation}
\label{eq: relation threshold}
\dfrac{\phi_0(h_i^*)}{\phi_1(h_i^*)} = \dfrac{\phi_0(h_j^*)}{\phi_1(h_j^*)} = r^*,
\end{equation}
for $\forall$ $i, j \in \{1,2,\dots,n\}$ and some optimal constant $r^* > 0$. 
\end{theorem}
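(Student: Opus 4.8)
The plan is to fix the number of thresholds $n$ and regard $I(X;Z)$ as a differentiable function of $\textbf{h}=(h_1,\dots,h_n)$ on the open region $\{h_1<\dots<h_n\}\subset\mathbb{R}^n$. Since the optimal $\textbf{h}^*$ has strictly increasing (hence distinct and finite) entries, it lies in the interior of this region, so it is a stationary point and $\partial I/\partial h_i=0$ for every $i$. First I would rewrite everything through the channel matrix: $q_0=p_0A_{11}+p_1(1-A_{22})$, $q_1=p_0(1-A_{11})+p_1A_{22}$, and $I=H(Z)-H(Z\mid X)$ with $H(Z\mid X)=p_0H_b(A_{11})+p_1H_b(A_{22})$, where $H_b(t)=-t\log t-(1-t)\log(1-t)$ is the binary entropy function. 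This reduces the problem to computing $\partial A_{11}/\partial h_i$ and $\partial A_{22}/\partial h_i$.

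The key step is the effect of perturbing a single threshold $h_i$. Because $H_1,\dots,H_{n+1}$ are assigned to $\mathbb{H}$ and $\bar{\mathbb{H}}$ alternately, increasing $h_i$ transfers an infinitesimal interval at $h_i$ from the partition immediately to its left to the one immediately to its right, one labelled $0$ and the other labelled $1$. By the fundamental theorem of calculus this yields $\partial A_{11}/\partial h_i=(-1)^{i+1}\phi_0(h_i)$ and $\partial A_{22}/\partial h_i=(-1)^{i}\phi_1(h_i)$, hence $\partial q_0/\partial h_i=(-1)^{i+1}f(h_i)$ where $f(y)=p_0\phi_0(y)+p_1\phi_1(y)$ is the marginal density of $Y$. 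Differentiating $H(Z)$ and $H_b(\cdot)$, the additive constants coming from the derivative of $t\log t$ cancel because $q_0+q_1=1$ and $A_{jj}+(1-A_{jj})=1$, leaving
\begin{equation}
\frac{\partial I}{\partial h_i}=(-1)^{i+1}\left[\,f(h_i)\log\frac{q_1}{q_0}-p_0\phi_0(h_i)\log\frac{1-A_{11}}{A_{11}}+p_1\phi_1(h_i)\log\frac{1-A_{22}}{A_{22}}\,\right].
\end{equation}

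Setting this to zero (the factor $(-1)^{i+1}$ is nonzero) and substituting $f(h_i)=p_0\phi_0(h_i)+p_1\phi_1(h_i)$, the stationarity condition rearranges to
\begin{equation}
p_0\,\phi_0(h_i^*)\,\log\frac{q_1A_{11}}{q_0(1-A_{11})}+p_1\,\phi_1(h_i^*)\,\log\frac{q_1(1-A_{22})}{q_0A_{22}}=0 .
\end{equation}
The two logarithmic coefficients depend only on the global quantities $p_0,p_1,A_{11},A_{22},q_0,q_1$ and not on $i$, so solving for the density ratio gives $\phi_0(h_i^*)/\phi_1(h_i^*)=r^*$ with the \emph{same} $r^*$ for all $i$, which is exactly \eqref{eq: relation threshold}. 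Positivity $r^*>0$ follows because at a nondegenerate optimum the two logarithmic coefficients are nonzero with opposite signs (equivalently, the backward posteriors satisfy $P(X{=}0\mid Z{=}0)>P(X{=}0\mid Z{=}1)$ and $P(X{=}1\mid Z{=}0)<P(X{=}1\mid Z{=}1)$), so $r^*$ equals a quotient of two positive quantities; one can also simply note that $r^*$ is a quotient of nonnegative densities at a point where $r$ is finite and continuous.

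I expect the main obstacle to be the careful bookkeeping in the derivative computation: getting the alternating signs $(-1)^{i+1}$ correct simultaneously for $A_{11}$ and $A_{22}$ under a perturbation of $h_i$, and checking that $\partial I/\partial h_i$ really does collapse to a single $i$-independent constant once the $\log$-of-unity terms cancel. A secondary, minor point is justifying that the claimed optimal $\textbf{h}^*$ is genuinely an interior stationary point rather than a boundary configuration (coincident thresholds, which would reduce $n$, or thresholds escaping to $\pm\infty$); this is handled by the hypothesis that $\textbf{h}^*$ has strictly increasing finite entries together with differentiability of $I$ on the open region of admissible vectors.
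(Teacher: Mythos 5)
Your proposal is correct and follows essentially the same route as the paper: impose first-order stationarity of $I(\textbf{h})$ at the interior optimum, compute $\partial A_{11}/\partial h_i$ and $\partial A_{22}/\partial h_i$, and observe that the resulting condition expresses $\phi_0(h_i^*)/\phi_1(h_i^*)$ as a quotient of $i$-independent global quantities (your rearranged stationarity equation is algebraically identical to the paper's equation (\ref{eq: 3})). If anything you are more careful than the paper, which writes the derivatives without the alternating factor $(-1)^{i+1}$ and justifies interiority via KKT multipliers that vanish rather than directly; neither difference is substantive.
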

\begin{proof}
We note that using the optimal thresholding vector $\textbf{h}^*$, the quantization mapping follows 
(\ref{eq:decoding}).  $\textbf{h}^*$ divides $\mathbb{R}$ into $n+1$ contiguous disjoint segments, each maps to either 0 or 1 alternatively.
The overall DMC in Fig. \ref{fig:setup} has the channel matrix 
\[
A^*=
  \begin{bmatrix}
A_{11} & A_{12}\\
A_{21} & A_{22}
  \end{bmatrix},
\]
and the mutual information can be written as a function of $\textbf{h}$ as: 
\begin{equation}
I(\textbf{h})=H(Z)-H(Z|X)=H(q_0)-[p_0H(A_{11})+p_1H(A_{22})],
\end{equation}
where for any $w \in [0,1]$, $H(w)=-[w\log(w)+(1-w)\log(1-w)]$ and $q_0=P(Z=0)=p_0A_{11}+p_1A_{21}$.

This is an optimization problem that maximizes $I(\textbf{h})$.  The theory of optimization requires that an optimal point must satisfy the KKT conditions \cite{boyd2004convex}.  In particular, define the Lagrangian function as:
\begin{equation}
\label{eq:lagrangian}
L(\textbf{h},\lambda) = I(\textbf{h}) + \sum_{i=1}^{n-1}{\lambda_i (h_i - h_{i+1})},
\end{equation}
then the KKT conditions \cite{boyd2004convex} states that, an optimal point $\textbf{h}^*$ and $\lambda^*= (\lambda^*_1, \lambda^*_2, \dots, \lambda^*_{n-1})$ must satisfy:

\begin{equation}
\label{eq:kkt1}
\begin{cases}
\frac{\partial{L(\textbf{h}, \lambda}}{\partial{h_i}}|_{\textbf{h} = \textbf{h}^*,\lambda = \lambda^*}, i = 1, 2, \dots, n-1,\\
\lambda_i^*(h_i - h_{i+1}) = 0, i = 1, 2, \dots, n-1, \\
\lambda_i^* \ge 0, i = 1, 2, \dots, n-1.
\end{cases}
\end{equation}
Since the structure of the quantizer requires that $h_i < h_{i+1}$, the second and the third conditions in (\ref{eq:kkt1}) together imply that $\lambda_i^* = 0, i = 1, 2, \dots, n-1$. Consequently, from (\ref{eq:lagrangian}) and the first condition in (\ref{eq:kkt1}), we have:

$$\frac{\partial{L(\textbf{h}, \lambda)}}{\partial{h_i}}|_{\textbf{h} = \textbf{h}^*,\lambda = \lambda^*} = \frac{\partial{I(\textbf{h})}}{\partial h_i}|_{\textbf{h}= \textbf{h}^*} = 0.$$

The stationary points can be found by setting the partial derivatives with respect to each $h_i$ to zero:
{\small
\begin{eqnarray}
 \frac{\partial I(\textbf{h})}{\partial h_i} &=&(\log\dfrac{1-q_0}{q_0} )\frac{\partial q_0}{\partial h_i} -p_0(\log \dfrac{1-A_{11}}{A_{11}}) \frac{\partial A_{11}}{\partial h_i}-p_1(\log\dfrac{1-A_{22}}{A_{22}})\frac{\partial A_{22}}{\partial h_i} \nonumber\\
&=&(\log\dfrac{1\!-\!q_0}{q_0})(p_0 \frac{\partial A_{11}}{\partial h_i} \!- \!p_1 \frac{\partial A_{22}}{\partial h_i} )\!-\!p_0(\log\dfrac{1-A_{11}}{A_{11}}) \frac{\partial A_{11}}{\partial h_i} \!-\!p_1(\log\dfrac{1-A_{22}}{A_{22}})\frac{\partial A_{22}}{\partial h_i}\label{eq: 2}\\
&=&p_0 \frac{\partial A_{11}}{\partial h_i} (\log \dfrac{1-q_0}{q_0}-\log \dfrac{1-A_{11}}{A_{11}})-p_1 \frac{\partial A_{22}}{\partial h_i} (\log \dfrac{1-q_0}{q_0}+\log \dfrac{1-A_{22}}{A_{22}}) = 0,
\label{eq:derivative}
\end{eqnarray}
}
with (\ref{eq: 2}) due to $q_0=p_0A_{11}+p_1A_{21}=p_0A_{11}+p_1(1-A_{22})$.

Since  $\frac{\partial A_{11}}{\partial h_i}=\phi_0(h_i)$ and $\frac{\partial A_{22}}{\partial h_i}=-\phi_1(h_i)$, from (\ref{eq:derivative}), we have:
\begin{equation}
\label{eq: 3}
\dfrac{\phi_0(h^*_i)}{\phi_1(h^*_i)}=-\dfrac{p_1}{p_0}\dfrac{\log\dfrac{1-q_0}{q_0}+\log\dfrac{1-A_{22}}{A_{22}}}{\log\dfrac{1-q_0}{q_0}-\log\dfrac{1-A_{11}}{A_{11}}}= r^*.
\end{equation}

Since (\ref{eq: 3}) holds for $\forall$ $i$, the RHS of (\ref{eq: 3}) equals to some constant $r^* > 0 $ for a quantizer $Q^*$,  Theorem \ref{theorem: 1} follows.
\end{proof}

\textbf{Remark:} An important  of Theorem \ref{theorem: 1} is as follows.  Suppose the optimal value $r^*$ is given and the equation $r(y) = r^*$ has $m$ solutions: $y_1 < y_2 < \dots <  y_m$. Then,  Theorem \ref{theorem: 1} says that the optimal quantizer must either have its thresholding vector be $(y_1, y_2, \dots, y_m)$ or  one of its ordered subsets, e.g., $(h^*_1, h^*_2) = (y_1, y_3)$, or both.   In Theorem \ref{Theorem: 2} below, we will show that the quantizer whose thresholding vector is all the solutions of $r(y) = r^*$, will be at least as good as any quantizer whose thresholding vector is a ordered subset of the set of all solutions.  Moreover, we will show that $r^*$ is unique, and describe an efficient procedure for finding $r^*$ in Section \ref{sec:unique}.	

\begin{theorem}
\label{Theorem: 2}
Let $y^*_1 < y^*_2< \dots < y^*_n$ be the solutions of $r(y) = r^*$ for the optimal constant $r^* > 0$. Let $Q^n_{r^*}$ be the quantizer whose thresholding vector is all the solutions, i.e., $h^*_i = y^*_i, i = 1, 2, \dots, n$, then for $k < n$, $Q^n_{r^*}$ is at least as good as any quantizer $Q^k_{r^*}$ whose thresholding vector is an ordered subset of the set of $(h^*_1, h^*_2, \dots, h^*_n)$.
\end{theorem}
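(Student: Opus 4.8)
The plan is to prove something slightly stronger, namely that $Q^n_{r^*}$ actually attains the global maximum of $I(X;Z)$ over all finite-threshold quantizers; Theorem \ref{Theorem: 2} then follows at once, since every $Q^k_{r^*}$ with an ordered subset of the solutions as its thresholding vector is itself a finite-threshold quantizer. The lever is the standard fact that, for the fixed input pmf $\textbf{p}$, the mutual information $I(X;Z)$ is a \emph{convex} function of the channel matrix, i.e.\ of the pair $(A_{11},A_{22})$. Writing $a=\int_{\mathbb{H}}\phi_0(y)\,dy$ and $b=\int_{\mathbb{H}}\phi_1(y)\,dy$, so that $A_{11}=a$ and $A_{22}=1-b$, we thus view $I$ as a convex function of $(a,b)\in\mathbb{R}^2$, and we must maximize it over the set $\mathcal{R}$ of pairs $(a,b)$ that are realizable by some $0$-region $\mathbb{H}\subseteq\mathbb{R}$.

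Next I would analyze $\mathcal{R}$. By Lyapunov's theorem on the range of a non-atomic vector measure, $\mathcal{R}$ is convex and compact, and a point of $\mathcal{R}$ is extreme only if the corresponding $\mathbb{H}$ is, up to a null set, a likelihood-ratio level set $\{y:r(y)>t\}$ or $\{y:r(y)<t\}$ — indeed such a set is exactly the maximizer of a generic linear functional $\alpha a+\beta b=\int_{\mathbb{H}}(\alpha\phi_0+\beta\phi_1)\,dy$. By the standing assumption that $r(y)=t$ has only finitely many solutions, every such level set is a finite union of intervals, hence is the $0$-region of a finite-threshold quantizer whose thresholds are precisely the solutions of $r(y)=t$. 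Since a convex function attains its maximum over a convex compact set at an extreme point, some optimal quantizer $Q_\diamond$ is of this form; its thresholding vector is all solutions of $r(y)=t_\diamond$ for some $t_\diamond>0$. Finally, Theorem \ref{theorem: 1} applied to the optimal $Q_\diamond$ forces every threshold of $Q_\diamond$ to satisfy $r(\cdot)=r^*$, while by construction they are exactly the solutions of $r(\cdot)=t_\diamond$; hence $t_\diamond=r^*$, so $Q_\diamond=Q^n_{r^*}$, which gives the claim.

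The main obstacle I expect is the extreme-point step, and in particular the degenerate case where some solution of $r(y)=r^*$ is a tangency (an even-order root of $r-r^*$) rather than a transversal crossing: there the level set $\{y:r(y)>r^*\}$ and the ``all solutions'' quantizer $Q^n_{r^*}$ need not induce the same labeling, and one must check separately that inserting or omitting a tangential threshold leaves $I$ unchanged — equivalently, that it does not move the realized point off the optimal face of $\mathcal{R}$. A more elementary route avoiding Lyapunov's theorem is induction on the number $n-k$ of missing thresholds, adjoining one solution of $r(y)=r^*$ at a time; the difficulty there is that adjoining a threshold at $y_j^*$ flips the output label on the entire ray $[y_j^*,\infty)$ and therefore perturbs $A_{11}$, $A_{22}$ and $q_0$ globally, so the additions cannot be handled independently. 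One would instead slide the new threshold from a position where it is inert down to $y_j^*$ and control the sign of $dI/dc$ along the path using the stationarity identity $r(c)=-\tfrac{p_1}{p_0}\tfrac{\log\frac{1-q_0}{q_0}+\log\frac{1-A_{22}}{A_{22}}}{\log\frac{1-q_0}{q_0}-\log\frac{1-A_{11}}{A_{11}}}$ obtained in the proof of Theorem \ref{theorem: 1}.
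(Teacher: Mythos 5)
Your proof is correct (modulo the tangency edge cases you already flag) but it takes a genuinely different route from the paper's. You invoke the convexity of $I(X;Z)$ in the channel matrix for fixed input pmf, together with Lyapunov's theorem on the range of the non-atomic vector measure $\mathbb{H}\mapsto\left(\int_{\mathbb{H}}\phi_0\,dy,\int_{\mathbb{H}}\phi_1\,dy\right)$, to conclude that a maximizer sits at an extreme point of the realizable region $\mathcal{R}$, i.e.\ at a likelihood-ratio level set; Theorem~\ref{theorem: 1} then pins the level down to $r^*$ and hence identifies the optimizer as $Q^n_{r^*}$. This is essentially the Burshtein-et-al.\ argument that the paper explicitly cites and deliberately sidesteps---the authors advertise ``alternative elementary proofs.'' The paper's own proof is far more pedestrian: it compares the optimal $m$-threshold vector $(h^*_1,\dots,h^*_m)$ with the optimal $(m-1)$-threshold vector $(z^*_1,\dots,z^*_{m-1})$ by appending an inert threshold $h_m\to\infty$ to the latter, so that $I(z^*_1,\dots,z^*_{m-1},h_m)\to I(z^*_1,\dots,z^*_{m-1})$, and derives a contradiction from the assumption $I(h^*_1,\dots,h^*_m)<I(z^*_1,\dots,z^*_{m-1})$; induction on the threshold count then gives the claim. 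Your route buys strength and scope: it establishes global optimality of $Q^n_{r^*}$ over \emph{all} measurable $0$-regions in one shot, which in particular justifies the paper's a priori restriction to quantizers with finitely many thresholds, something the paper's Theorem~\ref{Theorem: 2} does not by itself address. The paper's route buys self-containedness and elementarity---no Lyapunov theorem, no Krein--Milman, no abstract convexity---which is the authors' stated aim; the price is that it only compares optima across threshold counts and must lean on Theorems~\ref{theorem: 1} and~\ref{theorem: 3} to identify those optima with $Q^m_{r^*}$.
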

\begin{proof}
Let $(h^*_1, h^*_2, \dots, h^*_m) $ be an optimal thresholding vector for all the quantizers having $m$ thresholds ($m \le n$).
Let $(z^*_1, z^*_2, \dots, z^*_{m-1})$ be an optimal thresholding vector for all quantizers having $m-1$ thresholds.
The mutual information can be written as a function of these quantizers as:  $I(h^*_1,  h^*_2, \dots, h^*_m)$ and $I(z^*_1, z^*_2, \dots, z^*_{m-1})$.  
We will first show that $I(h^*_1,  h^*_2, \dots, h^*_m) \geq I(z^*_1, z^*_2, \dots, z^*_{m-1})$, for any $m > 0$.
This will be proved using contradiction.  

Assume that $I(h^*_1,  h^*_2, \dots, h^*_m) < I(z^*_1, z^*_2, \dots, z^*_{m-1})$, then
\begin{equation}
\label{eq: theorem2-1}
I(z^*_1, z^*_2, \dots, z^*_{m-1}) = I(h^*_1,  h^*_2, \dots, h^*_m) + \delta,
\end{equation}
where $\delta$ is a positive constant. 

Since $(h^*_1, h^*_2, \dots, h^*_m) $ is optimal,
\begin{equation}
\label{eq: theorem2-2a}
 I(h^*_1,  h^*_2, \dots, h^*_m) \geq  I(h_1,  h_2, \dots, h_{m-1},h_m),
\end{equation}
for any $h_i < h_{i+1}$, $i = 1, 2, \dots, m-1$.

Now replacing $h_i = z^*_i$, for $i = 1, 2, \dots, m-1$ into (\ref{eq: theorem2-2b}), we have:
 
\begin{equation}
\label{eq: theorem2-2b}
I(h^*_1,  h^*_2, \dots, h^*_m) \geq  I(z^*_1,  z^*_2, \dots, z^*_{m-1},h_m).
\end{equation}
Since  $\int^{\infty}_{-\infty}{\phi_i(y)dy} = 1$, $\forall$ $i=1,2$, 
 $$\lim_{y \rightarrow \infty} \phi_i(y)=0, i=1,2.$$  
Consequently,
$$\lim_{h_m \rightarrow \infty}  I(z^*_1,  z^*_2, \dots, z^*_{m-1},h_m)= I(z^*_1,  z^*_2, \dots, z^*_{m-1}).$$
Equivalently,  there exists an $h_m > N_{\epsilon}$ such that
\begin{equation}
\label{eq: theorem2-3}
 | I(z^*_1,  z^*_2, \dots, z^*_{m-1},h_m)- I(z^*_1,  z^*_2, \dots, z^*_{m-1})| \leq \epsilon,
\end{equation}
for any $\epsilon > 0$. Next, we pick a $N_{\epsilon}$ such that $\epsilon < \delta$. Then,
\begin{eqnarray}
 I(h^*_1,  h^*_2, \dots, h^*_m) & = & I(z^*_1,  z^*_2, \dots, z^*_{m-1}) + I(h^*_1,  h^*_2, \dots, h^*_m)  -  I(z^*_1,  z^*_2, \dots, z^*_{m-1}) \nonumber\\
  & \geq &  I(z^*_1,  z^*_2, \dots, z^*_{m-1})- |I(h^*_1,  h^*_2, \dots, h^*_m)  -  I(z^*_1,  z^*_2, \dots, z^*_{m-1})| \nonumber \\
 & \geq &  I(h^*_1,  h^*_2, \dots, h^*_m) +\delta - \epsilon \label{eq: theorem2-5},
\end{eqnarray}
where (\ref{eq: theorem2-5}) is due to (\ref{eq: theorem2-1}) and (\ref{eq: theorem2-3}). Since $\delta - \epsilon > 0$ by assumption, (\ref{eq: theorem2-5}) indicates that  $I(h^*_1,  h^*_2, \dots, h^*_m)$ is strictly greater than itself which is  a contradition. Thus, $I(h^*_1,  h^*_2, \dots, h^*_m) \geq I(z^*_1, z^*_2, \dots, z^*_{m-1})$. 

Next, since $(z^*_1, z^*_2, \dots, z^*_{m-1})$ is an optimal thresholding vector for all quantizers having $m-1$ thresholds, 
$I(z^*_1, z^*_2, \dots, z^*_{m-1}) \geq I(\bar{h}^*_1, \bar{h}^*_2, \dots, \bar{h}^*_{m-1})$ where $(\bar{h}^*_1, \bar{h}^*_2, \dots, \bar{h}^*_{m-1})$ is an arbitrary subset of $(h^*_1, h^*_2, \dots, h^*_m) $. Thus, $I(h^*_1,  h^*_2, \dots, h^*_m) \geq  I(z^*_1, z^*_2, \dots, z^*_{m-1}) \geq I(\bar{h^*_1}, \bar{h^*_2}, \dots, \bar{h^*_{m-1}})$. Consequently, by induction, $Q^n_{r^*}$ is at least as good as any quantizer $Q^k_{r^*}$,  $\forall$ $k < n$.



\end{proof}

\begin{corollary}
\label{cor: 1}
If 
\begin{equation}
\label{eq: brian}
r(y) = \dfrac{\phi_0(y)}{\phi_1(y)} 
\end{equation} is a strictly increasing/decreasing function, then (a) the optimal quantizer consists of only a single threshold $h^*_1$ and (b) it is unique. 
\end{corollary}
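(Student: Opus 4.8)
The plan is to obtain both claims from Theorem~\ref{theorem: 1}, with the uniqueness assertion (b) additionally resting on the uniqueness of the optimal constant $r^*$. For part (a), let $\textbf{h}^*=(h_1^*,\dots,h_{n^*}^*)$ be a globally optimal thresholding vector, taken over all quantizers with finitely many thresholds. Since such a $\textbf{h}^*$ is in particular optimal among all quantizers with exactly $n^*$ thresholds, Theorem~\ref{theorem: 1} applies and forces $r(h_1^*)=\dots=r(h_{n^*}^*)=r^*$. If $r$ is strictly increasing or strictly decreasing it is injective, so $r(y)=r^*$ has at most one solution; because the $h_i^*$ must be strictly ordered, this yields $n^*\le 1$. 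To exclude $n^*=0$, note that a quantizer with no threshold maps every $y$ to one symbol and attains $I(X;Z)=0$, whereas strict monotonicity of $r$ gives $\phi_0\not\equiv\phi_1$; hence for some threshold $h$ one has $\Phi_0(h)\neq\Phi_1(h)$ (where $\Phi_i(h)=\int_{-\infty}^{h}\phi_i(y)\,dy$) and the corresponding single-threshold quantizer has $I(X;Z)>0$. Therefore $n^*=1$, proving (a).

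For part (b), once (a) is established the optimal quantizer is a single-threshold quantizer, and by Theorem~\ref{theorem: 1} its threshold $h_1^*$ must satisfy $r(h_1^*)=r^*$. Since $r$ is injective, $h_1^*$ is then completely determined by $r^*$, so uniqueness of the optimal quantizer is equivalent to uniqueness of $r^*$. I would therefore invoke the uniqueness of $r^*$ (the bisection-type argument developed in Section~\ref{sec:unique}): combined with injectivity of $r$, there is exactly one admissible threshold $h_1^*=r^{-1}(r^*)$, which gives (b).

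If one prefers a route for (b) that does not appeal forward to Section~\ref{sec:unique}, I would instead show directly that $I(h)$, viewed as a function of the single real threshold $h$, is strictly unimodal on $\mathbb{R}$. Writing $u=\Phi_0(h)$, $v=\Phi_1(h)$, $q_0=p_0u+p_1v$, and $H'(w)=\log\frac{1-w}{w}$, the first-derivative computation already performed in the proof of Theorem~\ref{theorem: 1} specializes to
\begin{equation*}
I'(h)=p_0\phi_0(h)\bigl[H'(q_0)-H'(u)\bigr]+p_1\phi_1(h)\bigl[H'(q_0)-H'(v)\bigr].
\end{equation*}
A short argument (using that $\int\phi_0=\int\phi_1=1$ forces $r$ to cross the value $1$ exactly once, so $D(h)=\Phi_0(h)-\Phi_1(h)$ keeps a fixed sign for all finite $h$) shows the curve $(\Phi_0(h),\Phi_1(h))$ lies strictly on one side of the diagonal $\{u=v\}$; hence $u,q_0,v$ are strictly ordered, $I(h)>0$ for every finite $h$, and $I(\pm\infty)=0$. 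Applying the mean value theorem to the strictly decreasing function $H'$ one can factor $I'(h)=p_0p_1\,|v-u|\,\phi_1(h)\,[\,r(h)H''(\theta)-H''(\eta)\,]$ for intermediate points $\theta,\eta$ lying between $u$ and $v$, and the goal is to show this expression changes sign exactly once as $h$ increases — with the strictly monotone factor $r(h)$ doing the essential work. Strict unimodality, and hence uniqueness of the maximizer, follows.

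The straightforward content here is part (a) together with the reduction of (b) to the uniqueness of $r^*$. The genuine obstacle is the self-contained proof of (b): certifying that $I'$ has a single sign change requires controlling the implicitly defined mean-value points $\theta(h),\eta(h)$ (equivalently, ruling out a third critical point of $I$), and this is precisely where strict monotonicity of $r$ must be exploited in an essential way rather than merely for injectivity. That is the step I expect to require the most care.
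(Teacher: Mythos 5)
Your proposal is correct and follows essentially the same route as the paper: part (a) comes from Theorem~\ref{theorem: 1} plus injectivity of a strictly monotone $r$, and part (b) is reduced to the uniqueness of $r^*$ established in Section~\ref{sec:unique}, exactly as the paper does. Your additional touches (explicitly ruling out the zero-threshold quantizer, and the sketched self-contained unimodality argument for (b)) go beyond the paper's terse proof but do not change the approach.
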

\begin{proof}
To prove part (a) we note that since $r(y)$ is a strictly increasing/decreasing function. Therefore,  $r(y_1) \neq r(y_2)$ for $y_1 \neq y_2$.  Thus, (\ref{eq: relation threshold})  will not hold for $h^*_1 \neq h^*_2$.
Consequently, the optimal quantizer has only a single threshold.

The proof of part (b) will be shown in Section \ref{sec:unique}.
\end{proof}

We note that in a previous result \cite{kurkoski2017single},  an optimality condition for a single threshold quantizer is that:
\begin{equation}
\label{eq: brian2}
s(y) = \log{\dfrac{\phi_0(y)}{\phi_1(y)}} 
\end{equation}
is a  monotonic function.  If $\dfrac{\phi_0(y)}{\phi_1(y)}$ is a strictly monotonic function, then previous result is a consequence of Corollary \ref{cor: 1} since  $\log(.)$ is a strictly monotonic function, any strictly monotonic function $\dfrac{\phi_0(y)}{\phi_1(y)}$ results in a strictly monotonic function $s(y)$. 

\begin{corollary}
\label{cor: 2}
If
\begin{equation}
\label{eq: additive}
\phi_0(y-\mu)=\phi_1(y) \text{   for some constant }  \mu,
\end{equation}
and 
$\phi_0(y)$ is a strictly log-concave or log-convex function, then using a single threshold quantizer is optimal.
\end{corollary}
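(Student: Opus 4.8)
The plan is to derive Corollary~\ref{cor: 2} as an immediate consequence of Corollary~\ref{cor: 1}: I will show that the additive-noise relation (\ref{eq: additive}) together with strict log-concavity (or log-convexity) of $\phi_0$ forces $r(y)=\phi_0(y)/\phi_1(y)$ to be a strictly monotonic function of $y$, at which point part~(a) of Corollary~\ref{cor: 1} yields that an optimal quantizer uses a single threshold.

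Concretely, I would first substitute $\phi_1(y)=\phi_0(y-\mu)$ into the ratio to get $r(y)=\phi_0(y)/\phi_0(y-\mu)$ and then pass to logarithms. Writing $g(y)=\log\phi_0(y)$, which is well defined and differentiable (the standing assumption that $r$ is continuous on $\mathbb{R}$ rules out zeros of $\phi_0,\phi_1$, and Assumption~2 gives differentiability), we obtain
\[
s(y)=\log r(y)=g(y)-g(y-\mu),\qquad s'(y)=g'(y)-g'(y-\mu).
\]
If $\phi_0$ is strictly log-concave then $g$ is strictly concave, so $g'$ is strictly decreasing; if $\phi_0$ is strictly log-convex then $g'$ is strictly increasing. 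In either case, for any fixed $\mu\neq 0$ the difference $g'(y)-g'(y-\mu)$ keeps a constant sign for all $y$ (its sign is determined by the sign of $\mu$ and by whether $g'$ is increasing or decreasing). Hence $s$ is strictly monotonic, and since $r=e^{s}$ and $\exp(\cdot)$ is strictly increasing, $r$ is strictly monotonic as well; Corollary~\ref{cor: 1}(a) then applies and delivers the claim.

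I do not anticipate a genuine obstacle, only a couple of points that need to be stated carefully. First, one should justify that strict concavity of the differentiable function $g$ gives a \emph{strictly} decreasing derivative rather than merely a weakly decreasing one: if $g'$ were constant on some interval then $g$ would be affine there, contradicting strict concavity, and the log-convex case is symmetric. Second, one should make explicit that $\phi_0,\phi_1>0$ on $\mathbb{R}$ so that $r$ and $\log r$ are everywhere defined; this is consistent with the assumption that $r$ is continuous and is automatic for the strictly log-concave/convex densities of interest (e.g.\ Gaussians). Finally, the degenerate case $\mu=0$ gives $\phi_0\equiv\phi_1$, so $I(X;Z)=0$ for every quantizer and the statement holds trivially; this can simply be noted or excluded.
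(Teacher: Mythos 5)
Your proof is correct and takes essentially the same route as the paper: both arguments reduce the claim to Corollary~\ref{cor: 1} by showing that (\ref{eq: additive}) plus strict log-concavity/log-convexity of $\phi_0$ makes $r(y)$ strictly monotonic, and both hinge on the equivalence between strict log-convexity and strict monotonicity of the logarithmic derivative $\phi_0'/\phi_0$. The paper differentiates $r$ directly and rewrites $r'(y)>0$ as $\phi_0'(y)/\phi_0(y) > \phi_0'(y-\mu)/\phi_0(y-\mu)$, whereas you pass to $s(y)=\log r(y)=g(y)-g(y-\mu)$ with $g=\log\phi_0$; these are the same computation in different notation, and your explicit handling of the degenerate case $\mu=0$ is a minor (correct) addition the paper omits.
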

\begin{proof}
Taking derivative of $r(y)$, we have:
\begin{equation}
 \frac{dr(y)}{dy}=\dfrac{\phi_0'(y)\phi_1(y)-\phi_0(y)\phi_1'(y)}{\phi_1(y)^2} >0,
\end{equation}
which is equivalent with:
\begin{equation}
\label{eq: log-concave}
\dfrac{\phi_0'(y)}{\phi_0(y)} >  \dfrac{\phi_1'(y)}{\phi_1(y)}.
\end{equation}
Using (\ref{eq: additive}), we have:
\begin{equation}
\dfrac{\phi_0'(y)}{\phi_0(y)}  >  \dfrac{\phi_0'(y-\mu)}{\phi_0(y-\mu)}.
\end{equation}

Now, a function $f(x)$ is strictly log-convex if and only if  $\dfrac{f'(x)}{f(x)}$ is a strictly increasing  function \cite{boyd2004convex}. Thus, if $\phi_0(y)$ is strictly log-convex, then
\begin{equation}
\label{eq: log-concave inequality}
\dfrac{\phi_0'(y)}{\phi_0(y)} >  \dfrac{\phi_0'(y-\mu)}{\phi_0(y-\mu)}.
\end{equation}
Thus, $r'(y) > 0$ or $r(y)$ is a strictly increasing  function which satisfies the condition for having an optimal single threshold quantizer in Corollary \ref{cor: 1}. A similar proof can be established  for log-concave functions. 
\end{proof}


{\bf Remark:} An important channel model used extensively in communication is the AWGN model with $y_i = x_i + n_i$ where  $n_i$ are independent normal distributions $N(0,\sigma_i)$ with $x_i \in \{x_0, x_1\}$.  Equivalently, $\phi_0(y)$ and $\phi_1(y)$ are the densities of normal distributions with means $x_0$, $x_1$, and variances $\sigma_0$, $\sigma_1$, respectively.  If $\sigma_0=\sigma_1$, a single threshold quantizer is optimal due to Corollary \ref{cor: 2}.  On the other hand, in the general case where $\sigma_0 \neq \sigma_1$, using Theorem \ref{Theorem: 2}, there are two solutions and an optimal quantizer indeed uses two thresholds.  
Examples \ref{ex: 2} and \ref{ex: 3} in Section \ref{sec:simulations} will illustrate this point in detail.

\section{Uniqueness of the Optimal Quantizer for A Given Number of Thresholds}
\label{sec:unique}
In this section, we will show that $r^*$ is unique, i.e., there is a single value of $r^*$ that maximizes the mutual information.  In Theorem \ref{Theorem: 2}, we define $Q^n_{r^*}$ be the quantizer whose thresholding vector is all the $n$ solutions of $r(y) = r^*$.  Combining the result that $r^*$ is unique (to be shown shortly) with Theorem \ref{Theorem: 2}, we can conclude that there is a unique optimal quantizer having $n$ thresholds.  Note that this does not rule out the case that there is another optimal quantizer with $k < n$ thresholds.  However, since $r^*$ is unique, the set of $k$ thresholds must be an ordered subset of the set of the $n$ thresholds.  Furthermore, we describe some efficient methods for finding $r^*$.  Once $r^*$ is determined, we can solve $r(y) = r^*$ to obtain the solutions which are the optimal thresholds.

For ease of analysis, we define a new variable $a$ as:
\begin{equation}
 \label{eq: relate phi0 phi1}
a=\dfrac{p_1\phi_1(y)}{p_0\phi_0(y)+p_1\phi_1(y)}=\dfrac{1}{1+\dfrac{p_0\phi_0(y)}{p_1\phi_1(y)}} =\dfrac{1}{1+\big(\dfrac{p_0}{p_1}\big)r},
\end{equation} 
where $$r = \dfrac{\phi_0(y)}{\phi_1(y)}.$$

We note that $a \in (0,1)$. In addition, the mapping from $r$ to $a$ is a one-to-one mapping. Furthermore, each value of $a$ corresponds to a different value of $r$ which in turn, corresponds to a quantizer in a set of possible quantizers that contains an optimal quantizer.
As an example, Fig. \ref{fig: support} shows the conditional densities $\phi_0(y)$ and $\phi_1(y)$, and the corresponding
$r(y)$ and $u(y)$ are shown in Fig. \ref{fig:cor 3} and Fig. \ref{fig:u}, respectively. Now, the mutual information 
$I(X;Z)$ can be re-written as a function of $a$, and is denoted as $I(X;Z)_{a}$. 

The gist of writing $I(X;Z)$ as a function of $a$ is as follows. While there are many stationary points of $I(X;Z)$ with respect to $\textbf{h}$, i.e., many $\textbf{h}$'s such that $\dfrac{\partial I(X;Z)}{\partial h_i} = 0$, we will show that  $I(X;Z)$  has one stationary point with respect to $a$, i.e., there is only a single $a^*$ for which $\dfrac{dI(X;Z)_a}{da}|_{a = a^*}$ = 0.  The significance of this result is that if $a^*$ is unique, then $r^*$ is unique since $r$ relates to $a$ through a one-to-one mapping. 
Furthermore, the optimal thresholds can be directly determined as the solutions to:
%
\begin{equation}
\label{eq: u-y}
\dfrac{p_1\phi_1(h)}{p_0\phi_0(h)+p_1\phi_1(h)}=a^*.
\end{equation}
To prove this result, we need a number of smaller results stated in the Lemma \ref{lemma: relate first derivation} and 
Lemma \ref{lemma: main result 4} below.  

First, let $$u(y) = \dfrac{p_1\phi_1(y)}{p_0\phi_0(y)+p_1\phi_1(y)}.$$

For given $a$, define $\mathbb{H}_a = \{y: u(y) < a\}$, then
$$ \mathbb{H}_a =  \{ (-\infty, h_1) \ \cup [h_2, h_3) \cup \dots \cup [h_n, +\infty) \}.$$

Similarly,  let $\bar{\mathbb{H}}_a = \{y: u(y) \ge a\}$, then
$$\bar{\mathbb{H}}_a = \mathbb{R} \setminus \mathbb{H}_a = \{ [h_1, h_2 ) \cup [h_3, h_4) \cup \dots \cup [h_{n-1}, h_n) \}.$$

The sets $\mathbb{H}_a$ and $\bar{\mathbb{H}}_a$ together specify a binary quantizer that maps $y$ to $z \in \{0,1\}$, depending on whether $y$ belongs to $\mathbb{H}_a$ or $\bar{\mathbb{H}}_a$ as shown in Fig. \ref{fig:u}.

\begin{figure}
	\centering
	\includegraphics[width=0.5\linewidth]{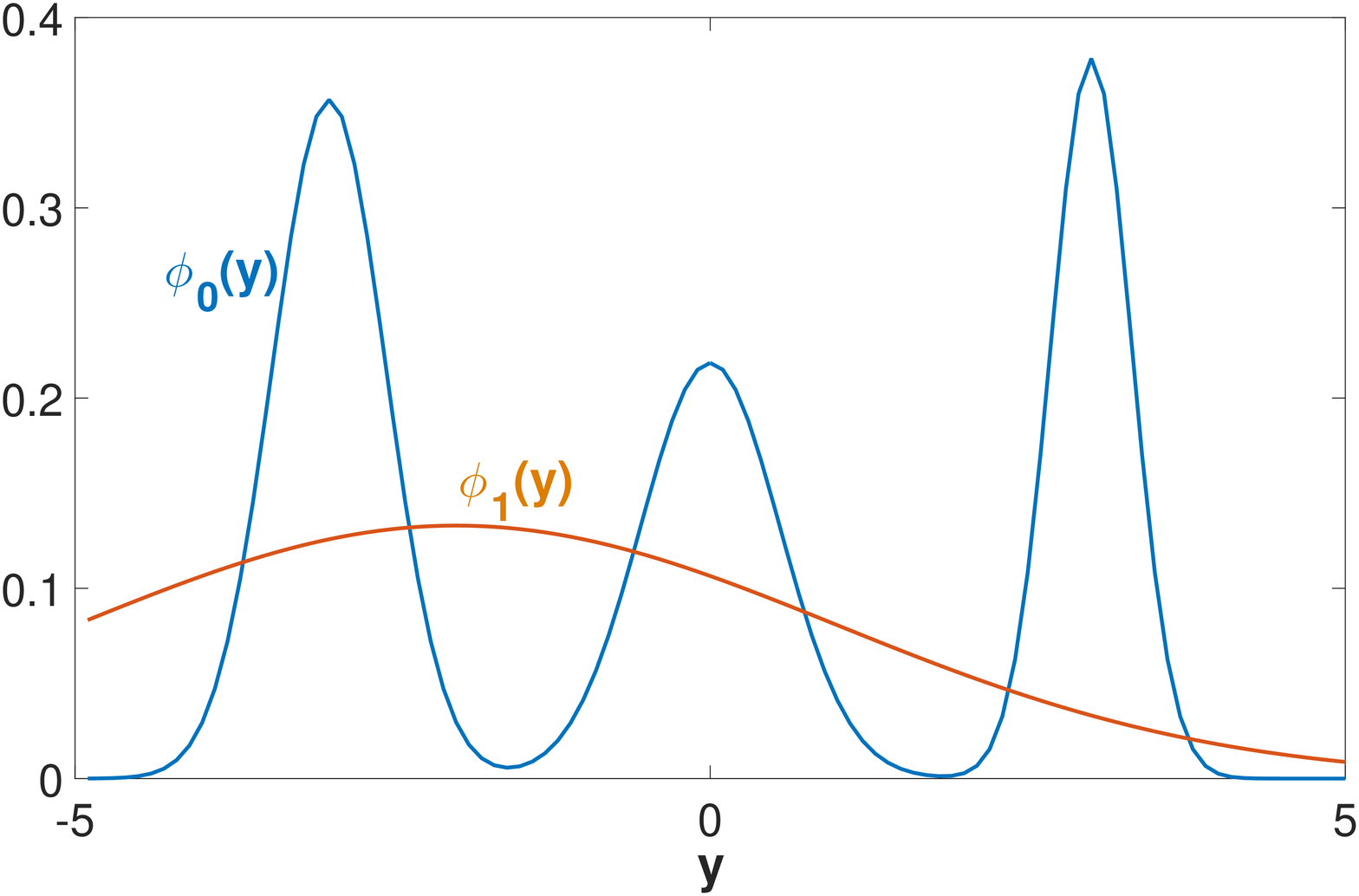}
	\caption{ Conditional densities  $\phi_0(y)=0.3 N(0,\sqrt{0.3} )+ 0.4 N(-3,\sqrt{0.2}) +0.3 N(3, \sqrt{0.1})$ and $\phi_1(y)=N(-2,3)$.  They are used in Fig. \ref{fig:cor 3} and Fig. \ref{fig:u}. }
	\label{fig: support}
\end{figure}

\begin{figure}
	\centering
	\includegraphics[width=0.5\linewidth]{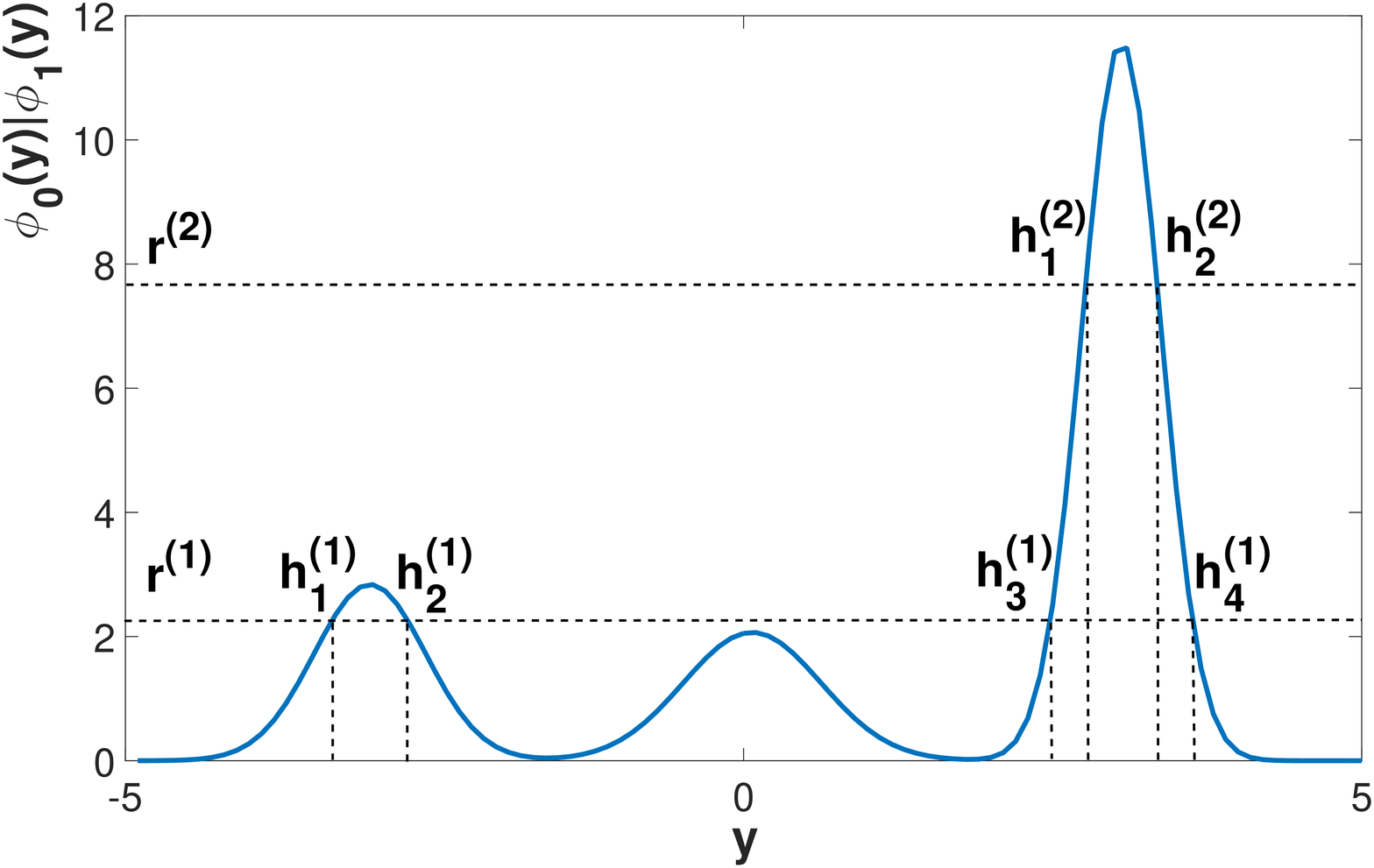}
	\caption{Two thresholding vectors $\textbf{h}^{(1)} = (h^{(1)}_1, h^{(1)}_2,h^{(1)}_3,h^{(1)}_4)$ and  $\textbf{h}^{(2)} = (h^{(2)}_1, h^{(2)}_2)$ correspond to two different values of $r$ are shown. $\phi_0(y)=0.3 N(0,\sqrt{0.3} )+ 0.4 N(-3,\sqrt{0.2}) +0.3 N(3, \sqrt{0.1})$, $\phi_1(y)=N(-2,3)$.}
	\label{fig:cor 3}
\end{figure}

\begin{figure}
	\centering
	\includegraphics[width=0.5\linewidth]{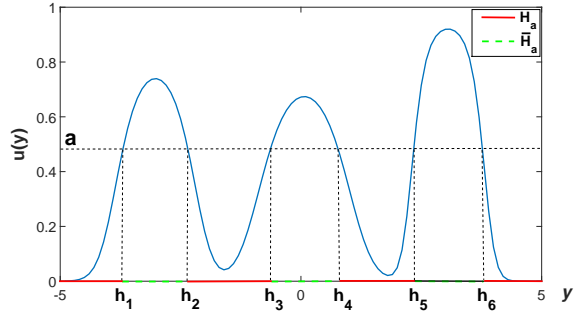}
	\caption{Illustration of the sets $\mathbb{H}_a$ and $\bar{\mathbb{H}}_a$.  $\mathbb{H}_a$ consists of solid red segments while  $\bar{\mathbb{H}}_a$ consists of green dotted segments.  In this example, there exists a quantizer with 6 thresholds $h_1, h_2, \dots, h_6$ that correspond to a specific value of $a=0.5$.  $p_0=p_1=0.5$, $\phi_0(y)=0.3 N(0,\sqrt{0.3} )+ 0.4 N(-3,\sqrt{0.2}) +0.3 N(3, \sqrt{0.1})$, $\phi_1(y)=N(-2,3)$.}
	\label{fig:u}
\end{figure}

Without the loss of generality, suppose we use the following quantizer:
\begin{equation}
z = 
\begin{cases}
0 & y \in \mathbb{H}_a, \\
1 & y \in \bar{\mathbb{H}}_a,
\end{cases}
\end{equation}
then the channel matrix of the overall DMC  is:
$$\begin{array}{cc}
A = \begin{bmatrix} f(a) & 1-f(a) \\
1-g(a)  &  g(a)
\end{bmatrix},
\end{array}$$
where $f(a) \stackrel{\triangle}{=} p(z=0|x=0)$ and $g(a) \stackrel{\triangle}{=} p(z = 1|x=1) $.  $f(a)$ and $g(a)$ can be written in terms of $\phi_0(y)$ and $\phi_1(y)$ as:

\begin{equation}
\label{eq: construct fa}
f(a)=\int_{y \in \mathbb{H}_a} \phi_0(y)dy=\int_{-\infty}^{h_1}\phi_0(y)dy + \int_{h_2}^{h_3}\phi_0(y)dy + \dots + \int_{h_{n}}^{+\infty}\phi_0(y)dy,
\end{equation}
and
\begin{equation}
\label{eq: construct ga}
g(a)=\int_{y \in \bar{\mathbb{H}}_a} \phi_1(y)dy=\int_{h_1}^{h_2}\phi_1(y)dy + \int_{h_3}^{h_4}\phi_1(y)dy + \dots + \int_{h_{n-1}}^{h_{n}}\phi_1(y)dy.
\end{equation}

As an example, if (\ref{eq: u-y}) has two solutions $(h_1,h_2)$, then the entries $f(a)$ and $g(a)$ can be constructed as: 
\begin{eqnarray}
f(a)&=&\int_{-\infty}^{h_1}\phi_0(y)dy + \int_{h_2}^{+\infty}\phi_0(y)dy\label{eq: fa for 2 noises},\\
g(a)&=&\int_{h_1}^{h_2}\phi_1(y)dy.\label{eq: ga for 2 noise}
\end{eqnarray}

Lemmas \ref{lemma: relate first derivation} and \ref{lemma: main result 4} below provide the properties of $f(a)$ and $g(a)$ and the relationship with each other.

\begin{lemma}
\label{lemma: relate first derivation}
Derivatives of $f(a)$ and $g(a)$ are related through the following equation:
 \begin{equation}
 \label{eq: relate ft gt derivation}
\frac{d f(a)}{d a}=-\dfrac{ap_0}{(1-a)p_1} \frac{d g(a)}{d a}.
 \end{equation}
\end{lemma}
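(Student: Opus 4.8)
The plan is to treat each threshold $h_i$ as an implicit function of $a$ determined by $u(h_i(a))=a$, differentiate the piecewise–integral formulas (\ref{eq: construct fa}) and (\ref{eq: construct ga}) for $f$ and $g$ via the Leibniz rule, and then exploit the fact that the equation $u(h)=a$ forces the ratio $\phi_1(h_i)/\phi_0(h_i)$ to be the \emph{same} number at every threshold.

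First I would establish differentiability of the thresholds. Since $\phi_0$ and $\phi_1$ are differentiable (Assumption 2), $u$ is differentiable, and at a threshold $h_i$ where the level $a$ is crossed transversally, i.e. $u'(h_i)\neq 0$ (the generic situation under the finiteness Assumption 1), the implicit function theorem gives that $h_i=h_i(a)$ is differentiable with $h_i'(a)=1/u'(h_i(a))$; hence $f$ and $g$ are differentiable in $a$. Applying the Leibniz rule to (\ref{eq: construct fa}): since each interior threshold $h_i$ appears once as an upper limit and once as a lower limit of adjacent subintervals of $\mathbb{H}_a$ (and the $\pm\infty$ endpoints contribute no boundary term),
\[
\frac{df(a)}{da}=\sum_{i=1}^{n}(-1)^{i+1}\,\phi_0\big(h_i(a)\big)\,h_i'(a),
\]
and, the same way, from (\ref{eq: construct ga}),
\[
\frac{dg(a)}{da}=\sum_{i=1}^{n}(-1)^{i}\,\phi_1\big(h_i(a)\big)\,h_i'(a).
\]
The key structural observation is that the two sums run over the \emph{same} set of thresholds and carry exactly opposite sign patterns. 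Next I would use the defining equation $u(h_i)=a$, i.e. $p_1\phi_1(h_i)=a\big(p_0\phi_0(h_i)+p_1\phi_1(h_i)\big)$, which rearranges to $(1-a)p_1\phi_1(h_i)=a\,p_0\phi_0(h_i)$; hence $\phi_1(h_i)/\phi_0(h_i)=\tfrac{a p_0}{(1-a)p_1}$, a constant independent of $i$. Substituting this common ratio into the expression for one of the two derivatives, pulling the constant out of the sum, and comparing with the other expression term by term — the factors $h_i'(a)$ are common and the alternating signs are negatives of one another — yields precisely the proportionality (\ref{eq: relate ft gt derivation}).

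I expect the main difficulty to be essentially clerical: keeping straight, as a function of the parity of $i$ and of $n$, which threshold is an upper versus a lower limit of each piece of $\mathbb{H}_a$ and $\bar{\mathbb{H}}_a$, so that the two alternating sums are verified to differ only by a global sign. A genuine, if minor, technical point is the degenerate case $u'(h_i)=0$, where $h_i(a)$ need not be differentiable at $a=u(h_i)$; under Assumption 1 such tangencies occur only at finitely many (isolated) values of $a$, and can either be excluded or absorbed by a limiting argument. Finally, note that one can bypass the implicit-function step entirely by a thin–sliver (co-area) argument: as $a$ increases to $a+da$, the set $\{y:u(y)<a\}$ grows by an interval of width $\approx da/|u'(h_i)|$ at each threshold, so that $f'(a)=\sum_i\phi_0(h_i)/|u'(h_i)|$ and $g'(a)=-\sum_i\phi_1(h_i)/|u'(h_i)|$ directly, after which the same substitution $\phi_1(h_i)/\phi_0(h_i)=\tfrac{a p_0}{(1-a)p_1}$ finishes the proof — and, as a bonus, exhibits the signs $f'>0$ and $g'<0$ that are useful later.
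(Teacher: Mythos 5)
Your proposal is correct and follows essentially the same route as the paper's own proof: differentiate the piecewise integrals (\ref{eq: construct fa}) and (\ref{eq: construct ga}) term by term with respect to the thresholds, use the chain rule with $\partial h_i/\partial a$, and substitute the common ratio $\phi_1(h_i)=\tfrac{ap_0}{(1-a)p_1}\phi_0(h_i)$ coming from $u(h_i)=a$ to match the two alternating sums up to the stated constant. The only difference is that you make explicit (via the implicit function theorem and the discussion of tangential crossings) the differentiability of $h_i(a)$, a point the paper takes for granted.
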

\begin{proof}
Please see the proof in Appendix \ref{sec: proof lemma 1}.
\end{proof}

\begin{lemma}
\label{lemma: main result 4}  For $\forall$ $a \in (0,1)$, 

(1) $g'(a)<0$ and $f'(a)>0$.

(2) $f(a)+ g(a) \geq 1$.
\end{lemma}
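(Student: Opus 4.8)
The plan is to exploit the fact that, for each fixed $a$, both $f$ and $g$ are finite sums of integrals over the intervals cut out by the level crossings of $u(y)$, and that these intervals alternate between $\mathbb{H}_a$ and $\bar{\mathbb{H}}_a$. Throughout I will write $m(y)=p_0\phi_0(y)+p_1\phi_1(y)$ for the marginal density of $Y$, so that $u(y)=\frac{p_1\phi_1(y)}{m(y)}$ and $1-u(y)=\frac{p_0\phi_0(y)}{m(y)}$.

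For part (1), I would first invoke Lemma~\ref{lemma: relate first derivation}: since $f'(a)=-\frac{a p_0}{(1-a)p_1}\,g'(a)$ and the prefactor is strictly positive on $(0,1)$, the two claims $f'(a)>0$ and $g'(a)<0$ are equivalent, so it suffices to prove one of them. I would prove $f'(a)>0$ directly. Fix $a$ and let $h_1(a)<\dots<h_n(a)$ be the solutions of $u(y)=a$; their number is finite by the first assumption on $r$ (note $u$ is a strictly decreasing function of $r$, so $u$ and $r$ have the same stationary points) and is locally constant in $a$. Writing $f(a)=\int_{-\infty}^{h_1}\phi_0+\int_{h_2}^{h_3}\phi_0+\dots+\int_{h_n}^{\infty}\phi_0$ and differentiating termwise gives $f'(a)=\sum_i \varepsilon_i\,\phi_0(h_i(a))\,h_i'(a)$ with $\varepsilon_i=+1$ for odd $i$ and $\varepsilon_i=-1$ for even $i$. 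From $u(h_i(a))=a$ we get $h_i'(a)=1/u'(h_i(a))$, and because $\mathbb{H}_a$ and $\bar{\mathbb{H}}_a$ alternate, $u$ crosses $a$ upward at each odd-indexed threshold and downward at each even-indexed one; hence each summand $\varepsilon_i\,\phi_0(h_i)/u'(h_i)$ is $\ge 0$, and strictly positive wherever $\phi_0(h_i(a))>0$, so $f'(a)>0$ and $g'(a)<0$.

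For part (2), I would first collapse the claim to a single integral: since $g(a)=1-\int_{\mathbb{H}_a}\phi_1$, we have $f(a)+g(a)-1=\int_{\mathbb{H}_a}(\phi_0(y)-\phi_1(y))\,dy$. A one-line computation with the expressions for $u$ and $1-u$ above yields the identity $\phi_0(y)-\phi_1(y)=\frac{m(y)}{p_0 p_1}\,(p_1-u(y))$, so that
\[
f(a)+g(a)-1=\frac{1}{p_0 p_1}\int_{\{u(y)<a\}}(p_1-u(y))\,m(y)\,dy.
\]
Since $\int_{\mathbb{R}}(p_1-u(y))\,m(y)\,dy=p_1-\int p_1\phi_1\,dy=0$, the same quantity also equals $\frac{1}{p_0 p_1}\int_{\{u(y)\ge a\}}(u(y)-p_1)\,m(y)\,dy$. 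Now I split on $a$: if $a\le p_1$, then $u(y)<a\le p_1$ throughout $\{u<a\}$, so the first integral has a nonnegative integrand; if $a\ge p_1$, then $u(y)\ge a\ge p_1$ throughout $\{u\ge a\}$, so the second integral has a nonnegative integrand. Either way $f(a)+g(a)\ge 1$.

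The hard part will be the bookkeeping in part (1) rather than the inequality itself: one must justify the termwise differentiation, i.e., that the number $n$ of level crossings is locally constant and that no $h_i(a)$ coincides with a stationary point of $u$ (equivalently of $r$), so that $h_i'(a)=1/u'(h_i)$ is well defined. The first assumption on $r$ ensures this can fail only at finitely many values of $a$; at those values $f$ is still continuous and monotone (because $a\mapsto\mathbb{H}_a$ is increasing as a set), which is enough to propagate the strict monotonicity. Part (2), by contrast, needs no such care — it is just the algebraic identity above together with the $a\le p_1$ / $a\ge p_1$ dichotomy.
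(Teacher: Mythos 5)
Your proof is correct and follows essentially the same route as the paper's: part (2) is the paper's own dichotomy on $a \le p_1$ versus $a > p_1$ (your identity $\phi_0-\phi_1=\frac{m}{p_0p_1}(p_1-u)$ is just a repackaging of the paper's pointwise comparison of $\phi_0$ and $\phi_1$ on $\mathbb{H}_a$, resp.\ $\bar{\mathbb{H}}_a$), and part (1) rests on the same observation that $\mathbb{H}_a=\{y:u(y)<a\}$ grows with $a$, which the paper asserts in one line and you verify by termwise differentiation. If anything, your part (1) is more careful than the paper's, since you address the regularity of the thresholds $h_i(a)$ and the strictness of the monotonicity.
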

\begin{proof}
Please see the proof in Appendix \ref{sec: proof lemma 2}.
\end{proof}

We are now ready to show that there is a unique stationary point $a^*$ for $I(X;Z)_a$. 

\begin{theorem}
\label{theorem: 3}
The mutual information $I(X;Z)_{a}$ has a unique stationary point $a^*$, where $$\frac{d I(X;Z)_{a}}{d a}\big|_{a = a^*} = 0.$$
\end{theorem}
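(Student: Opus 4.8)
The plan is to write $I(X;Z)_a$ explicitly in terms of the DMC induced by the $a$-thresholding quantizer, differentiate it once in $a$, and use Lemmas~\ref{lemma: relate first derivation} and~\ref{lemma: main result 4} to collapse $\frac{dI(X;Z)_a}{da}$ into a single scalar factor whose zeros can be counted. First I would use the channel matrix $\begin{bmatrix} f(a) & 1-f(a)\\ 1-g(a) & g(a)\end{bmatrix}$ to write $I(X;Z)_a = H(q_0) - p_0 H(f(a)) - p_1 H(g(a))$, where $q_0 = q_0(a) = p_0 f(a)+p_1(1-g(a))$ and $H(w)=-w\log w-(1-w)\log(1-w)$. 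Differentiating, with $H'(w)=\log\frac{1-w}{w}$ and $q_0'=p_0 f'-p_1 g'$, gives
$$\frac{dI(X;Z)_a}{da}=p_0 f'(a)\Bigl[\log\tfrac{1-q_0}{q_0}-\log\tfrac{1-f}{f}\Bigr]-p_1 g'(a)\Bigl[\log\tfrac{1-q_0}{q_0}+\log\tfrac{1-g}{g}\Bigr].$$

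Next I would eliminate $f'(a)$ with Lemma~\ref{lemma: relate first derivation}, leaving only $g'(a)$; since $g'(a)\neq 0$ on $(0,1)$ by Lemma~\ref{lemma: main result 4}(1), it factors out and $\frac{dI(X;Z)_a}{da}=c(a)\,\Phi(a)$ where $c(a)$ is an explicit positive multiple of $-g'(a)$ and $\Phi(a)$ is an explicit function — a positive combination of $A(a):=\log\frac{1-q_0}{q_0}-\log\frac{1-f}{f}$ and $B(a):=\log\frac{1-q_0}{q_0}+\log\frac{1-g}{g}$, of the shape $\Phi(a)=\log\frac{1-q_0(a)}{q_0(a)}-(1-a)\log\frac{1-f(a)}{f(a)}+a\log\frac{1-g(a)}{g(a)}$ up to the exact coefficients coming from Lemma~\ref{lemma: relate first derivation}. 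Then the stationary points of $I(X;Z)_a$ are exactly the zeros of $\Phi$ in $(0,1)$, so it suffices to show $\Phi$ has a unique zero; as a sanity check this should reproduce the relation of Theorem~\ref{theorem: 1} at that point.

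For the count I would use Lemma~\ref{lemma: main result 4}(2). Because $q_0=p_0 f+p_1(1-g)$ lies between $1-g$ and $f$ when $f+g\geq 1$, one gets $A(a)\geq 0\geq B(a)$ and $\log\frac{(1-f)(1-g)}{fg}\leq 0$ for all $a$; in particular $\Phi(0^+)\geq 0$ and $\Phi(1^-)\leq 0$, so a zero exists. (Existence can also be seen directly: $I(X;Z)_a$ is continuous and nonnegative on $(0,1)$, tends to $0$ at both endpoints since the quantizer degenerates there, and is positive for some $a$, so its maximum is attained at an interior stationary point.) For ``at most one zero'' I would try to prove $\Phi$ strictly decreasing: differentiate $\Phi$, re-insert Lemma~\ref{lemma: relate first derivation} together with the identity $q_0'(a)=-p_1 g'(a)/a$ it yields, and bound $\Phi'(a)$ using $g'(a)<0$ and the sign information above.

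The main obstacle is exactly this monotonicity step. Although $f$, $g$ and $q_0$ are individually monotone in $a$ by Lemma~\ref{lemma: main result 4}, the quantities $\frac{1-q_0}{q_0}$ and $\frac{f}{1-f}$ move in opposite directions, so neither $A$ nor $B$ is manifestly monotone, and $\Phi'(a)$ turns into a competition between ``level'' terms controlled by $f+g\geq 1$ and ``rate'' terms such as $\frac{(1-a)f'}{f(1-f)}-\frac{ag'}{g(1-g)}$ pulling the other way. If clean global monotonicity is out of reach, the fallback is to show instead that every zero of $\Phi$ is a strict downward crossing — equivalently, that every stationary point of $I(X;Z)_a$ is a strict local maximum, via a second-order test fed by $f+g\geq 1$ — after which the elementary observation that two strict local maxima of a continuous function must enclose a local minimum forces uniqueness. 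Either way, Lemma~\ref{lemma: main result 4}(2) is the crucial input, which is presumably why that bound was isolated.
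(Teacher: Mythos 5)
Your setup matches the paper's: you write $I(X;Z)_a=H(q_0)-p_0H(f)-p_1H(g)$, differentiate, use Lemma~\ref{lemma: relate first derivation} to eliminate one of $f',g'$, factor out the nonvanishing derivative (the paper factors out $p_0f'(a)>0$, you factor out $g'(a)<0$ — same thing up to sign), and reduce the theorem to ``the scalar function $F(a)$ (your $\Phi$) has exactly one zero.'' You also correctly identify Lemma~\ref{lemma: main result 4}(2), $f(a)+g(a)\ge1$, as the crucial input, and your observation that $q_0$ sits between $1-g$ and $f$ is a clean way to see $A(a)\ge0\ge B(a)$.

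But the heart of the theorem is precisely the monotonicity of $F$, and you explicitly stop short of proving it: you say the step ``is exactly this obstacle,'' offer it as something to ``try,'' and then propose an unexecuted fallback (a second-order test). That is a genuine gap, not a stylistic deferral. The paper's proof closes it by a concrete decomposition $F'(a)=H(a)+G(a)$ and shows each piece is separately nonnegative: $H(a)\ge0$ follows from $f+g\ge1$ exactly as you sketched, while $G(a)\ge0$ requires a further inequality $A\ge B$ between $A=(p_0f+p_1(1-g))(p_0(1-f)+p_1g)$ and $B=p_0f(1-f)+p_1g(1-g)$, which in turn comes from nothing more than $(f+g-1)^2\ge0$ together with a bit of algebra. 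That inequality — the second, somewhat hidden use of Lemma~\ref{lemma: main result 4}(2) — is what turns the ``competition between level terms and rate terms'' you worry about into a perfect square with the right sign, and it is absent from your proposal. Finally, even once $F'(a)\ge0$ is established, $F$ is only non-decreasing, so one must separately rule out that $F$ vanishes on an interval; the paper does this by noting $F'\equiv0$ on an interval would force $H\equiv0$, hence $f+g\equiv1$ there, contradicting the equality case of Lemma~\ref{lemma: main result 4}(2). Your proposal does not address this degenerate possibility either.

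So: right reduction, right guess at the key lemma, but the two technical steps that actually deliver uniqueness — the $F'=H+G\ge0$ decomposition (with the auxiliary $A\ge B$) and the exclusion of a flat interval — are both missing, and you flag the first one as open rather than resolving it.
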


\begin{proof}
Using Lemma \ref{lemma: relate first derivation}, setting derivative of $I(X;Z)_a$ to zero, we have:
\begin{eqnarray}
\frac{d I(X;Z)_a}{d a}&=& p_0f'(a)[(\log(\dfrac{f(a)}{1-f(a)})-\log(\dfrac{p_0f(a)+p_1(1-g(a))}{p_0(1-f(a))+p_1g(a)}) \nonumber\\
&+& \dfrac{p_1^2(a-1)}{p_0^2a}(\log(\dfrac{g(a)}{1-g(a)})+ \log(\dfrac{p_0f(a)+p_1(1-g(a))}{p_0(1-f(a))+p_1g(a)})] \nonumber\\
&=& p_0f'(a)F(a) = 0,\label{eq: first derivation of I(x;z) with a}
\end{eqnarray}
where
\begin{eqnarray}
F(a)&=&\log(\dfrac{f(a)}{1-f(a)}) +  \dfrac{p_1^2(a-1)}{p_0^2a} \log(\dfrac{g(a)}{1-g(a)}) \nonumber\\
&-& \dfrac{p_0^2a + p_1^2(1-a)}{p_0^2a}\log(\dfrac{p_0f(a)+p_1(1-g(a))}{p_0(1-f(a))+p_1g(a)}).\label{eq: first derivation of I(x;z) with a1}
\end{eqnarray}

Since $f'(a) > 0$  from Lemma \ref{lemma: main result 4} and $p_0 >  0$,  the stationary points must occur at $F(a)=0$.  We will show that $F(a)$ has exactly one solution, and thus $I(X;Z)_a$ has a single stationary point $a^*$.

With a bit of algebra, we can show that $F'(a)=H(a)+G(a)$ where
\begin{eqnarray}
H(a)= \dfrac{p_1^2}{p_0^2a^2}[\log(\dfrac{g(a)}{1-g(a)})+\log(\dfrac{p_0f(a)+p_1(1-g(a))}{p_0(1-f(a))+p_1g(a)})],
\end{eqnarray}
and 
\begin{eqnarray}
G(a)&=& \dfrac{f'(a)}{f(a)(1-f(a))} + \dfrac{p_1^2(a-1)}{p_0^2a} \dfrac{g'(a)}{g(a)(1-g(a))}\nonumber\\ 
&-&  \dfrac{p_0^2a+p_1^2(1-a)}{p_0^2a}\dfrac{p_0f'(a)-p_1g'(a)}{(p_0f(a)+p_1(1-g(a)))(p_0(1-f(a))+p_1g(a))} \nonumber \\
&=& \dfrac{f'(a)}{f(a)(1\!-\!f(a))} \!+\! \dfrac{p_1^2(a-1)}{p_0^2a} \dfrac{g'(a)}{g(a)(1\!-\!g(a))} \!-\! \dfrac{p_0^2a \!+\!p_1^2(1\!-\!a)}{p_0^2a}\dfrac{p_0f'(a) \!-\!p_1g'(a)}{A},
\end{eqnarray}
where $A=[p_0f(a)+p_1(1-g(a))][p_0(1-f(a))+p_1g(a)]$. 

Next, we will show that $H(a) \geq 0$ and $G(a) \geq 0$, therefore, $F'(a) \geq 0$. To show that $H(a) \geq 0$, we note that:
\begin{eqnarray}
H(a) &=& \dfrac{p_1^2}{p_0^2a^2}\Big[\log\Big(\dfrac{g(a)}{1-g(a)}\Big)+\log\Big(\dfrac{p_0f(a)+p_1(1-g(a))}{p_0(1-f(a))+p_1g(a)}\Big)\Big]\\
&=& \dfrac{p_1^2}{p_0^2a^2} \log\Big(\dfrac{g(a)(p_0f(a)+p_1(1-g(a)))}{(1-g(a))(p_0(1-f(a))+p_1g(a))}\Big) \label{eq:a}\\
& \geq & \dfrac{p_1^2}{p_0^2a^2} \log\Big(\dfrac{g(a)(p_0(1-g(a))+p_1(1-g(a)))}{(1-g(a))(p_0g(a)+p_1g(a))}\Big) 
\label{eq:b} \\
&=& \dfrac{p_1^2}{p_0^2a^2} \log\Big(\dfrac{g(a)(1-g(a))}{(1-g(a))g(a)}\Big) \label{eq:c}\\
& = & \dfrac{p_1^2}{p_0^2a^2} \log{1} \label{eq:d}\\
&=& 0,
\end{eqnarray} 
where (\ref{eq:b}) is obtained by using $f(a) + g(a) \geq 1$ in Lemma \ref{lemma: main result 4}, specifically by replacing $f(a)$ with $1-g(a)$ in the numerator and $1-f(a)$ with $g(a)$ in the denominator of (\ref{eq:a}). (\ref{eq:c}) is obtained by noting that $p_0 + p_1$ = 1, and (\ref{eq:d}) is obtained using $1-f(a) \leq g(a)$. 

%

To show that $G(a) \geq 0$, we have:
\begin{small}
\begin{eqnarray}
G(a)&=& \dfrac{f'(a)}{f(a)(1\!-\!f(a))} \!+\! \dfrac{p_1^2(a-1)}{p_0^2a} \dfrac{g'(a)}{g(a)(1\!-\!g(a))} \!-\! \dfrac{p_0^2a \!+\!p_1^2(1\!-\!a)}{p_0^2a}\dfrac{p_0f'(a) \!-\!p_1g'(a)}{A} \nonumber \\
&=& f'(a) [\dfrac{1}{f(a)(1-f(a))} + \dfrac{p_1^3(a-1)^2}{p_0^3a^2} \dfrac{1}{g(a)(1-g(a))} -  \dfrac{(p_0^2a+p_1^2(1-a))^2}{p_0^3a^2 A}] \label{eq: v01}\\
&\geq &  f'(a) [\dfrac{1}{f(a)(1-f(a))} + \dfrac{p_1^3(a-1)^2}{p_0^3a^2} \dfrac{1}{g(a)(1-g(a))} -  \dfrac{(p_0^2a+p_1^2(1-a))^2}{p_0^3a^2 B}] \label{eq: v02}\\
&=& f'(a) \dfrac{p_0p_1[p_0ag(a)(1-g(a))-p_1(1-a)f(a)(1-f(a))]^2}{f(a)(1-f(a)) g(a)(1-g(a))p_0^3a^2 B  } \label{eq: v03}\\
&\geq & 0. 
\label{eq: G(t)}
\end{eqnarray}
\end{small}
where (\ref{eq: v01}) is due to Lemma \ref{lemma: relate first derivation}, (\ref{eq: v02}) is due to $A \geq B$ and $B=p_0f(a)(1-f(a))+ p_1g(a)(1-g(a))$ (please see the detailed proof of $A \geq B$ in Appendix \ref{sec: proof theorem 4}),  (\ref{eq: v03}) is due to algebraic manipulation. 

\textbf{Uniqueness of the stationary point.} Since $H(a) \geq 0$ and $G(a) \geq 0$, we have $F'(a) \geq 0$. Consequently, $F(a)$ is a non-decreasing function. Thus, there are two possible cases: (1) $F(a)=0$ has a single solution $a^*$ or (2)  $F(a)=0$ has uncountable number of solutions.  In the case (2), $F(a)$ must align with the x-axis starting from $a^*$ to $a^*+\epsilon$, for some $a^*$ and $\epsilon >0$. In Appendix \ref{apd: unique a fa+ga=0}, we show that case (2) is impossible. Thus, the $F(a)=0$ has a unique solution which implies that the stationary point of $I(X;Z)_a$ is unique.
\end{proof}

Since $a=\dfrac{1}{1+\dfrac{p_0\phi_0(y)}{p_1\phi_1(y)}}$ is a one-to-one mapping, the single optimal value $a^*$ maps to a single optimal 
$r^* = \dfrac{\phi_0(y)}{\phi_1(y)}$ and finally maps to a unique optimal quantizer $Q^n_{r^*}$.  We note again that there might be other optimal quantizers $Q^k_{r^*}$ with $k < n$.

{\bf Proof of part (b) in Corollary \ref{cor: 1}:} In the special case where $r(y)$ is strictly monotonic function, using Corollary \ref{cor: 1} and the fact that $r^*$ is unique, we conclude that the optimal quantizer not only has one threshold but is also unique.

%

{\bf Algorithmic implication.}  We note that to find the maximum $I(X;Z)_a$, from the analysis, one can either analytically solve for $F(a^*) = 0$ directly when possible (depending on the forms of $f(a)$, $g(a)$, etc.),  or more likely numerically find $a^*$ through an algorithm when analytical form is not possible.  In the latter, since $F(a)$ is a strictly monotonic function, one can use a bisection algorithm \cite{conte2017elementary} to find the single solution of $F(a^*) = 0$ quickly.  In particular, if for some $a_1 < a_2$ and if $F(a_1) < 0$  and $F(a_2) > 0$, one can evaluate $F(\frac{a_1 + a_2}{2})$ to determine whether it is larger or smaller than  0.  If it is larger than 0, we repeat the process on the interval $[a_1, \frac{a_1 + a_2}{2}]$.  Otherwise, we repeat the process on the interval $[\frac{a_1 + a_2}{2}, a_2]$.  The process repeats until the solution is found, i.e., within some $\epsilon$ away from zero.  As seen, the algorithm is very efficient with  $O(\log{N})$ divisions where $N \sim O(1/\epsilon)$. Once we find $a^*$, we then solve for $\textbf{h}^*$ numerically.  In Section \ref{sec:simulations}, we use this approach to find the optimal quantizers.

\section{Numerical Results}
\label{sec:simulations}
\subsection{Optimality of Single Threshold Quantizer}
\begin{exmp}
\label{ex: 2}
\end{exmp}
In this section, we consider a simple example to illustrate the optimality condition for single-threshold quantizer.
We consider a channel having $p_0=0.5$, $p_1=0.5$ and $\phi_0(y)=N(\mu=-1,\sigma=1)$, $\phi_1(y)=N(\mu=1,\sigma=1)$, respectively.  Effectively, this is the channel with the same i.i.d Gaussian noises $N(\mu=0,\sigma=1)$ being added to the inputs, where the input $X=\{ x_0 = -1, x_1 = 1 \}$.  Mathematically,
$$y=x_i+n, n \sim N(0,1),  i = 0, 1. $$



 Fig. \ref{fig:r} shows $r(y)=\frac{\phi_0(y)}{\phi_1(y)}$  as a function $y$. As seen,  $r(y)$ is a decreasing function.  Therefore, based on Corollary \ref{cor: 2},  we should expect that the optimal quantizer should have only one single threshold. Indeed, based on Theorem \ref{theorem: 3}, solving $F(a^*) = 0$ we obtain $a^* = 0.5$.  Since $r(y)$ is decreasing, $a(y)$ is increasing, and therefore, there is one unique optimal $h^*_1$ for the optimal $a^* = 0.5$ as indicated by the intersection of the horizontal red line and the blue curve in Fig. \ref{fig:a}.  The mutual information $I(X;Z)$ as the function of $a$ is numerically plotted in Fig. \ref{fig:I}. As seen, the optimal point is unique at $a^*=0.5$  which corresponds to $h^*_1=0$ and $I(X;Z)_{a^*}=0.84339$ bits which confirms our Theorems \ref{theorem: 3}.  
\begin{figure}
  \centering
  \includegraphics[width=3.2 in]{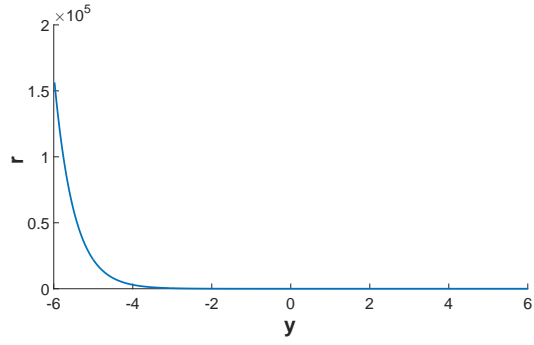}\\
  \caption{$r(y)$ as a function of $y$.}\label{fig:r}
 \end{figure}

\begin{figure}
  \centering
  \includegraphics[width=3.2 in]{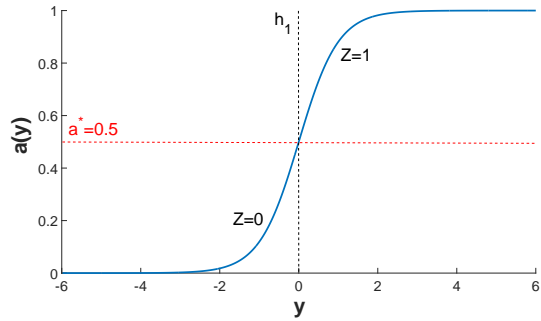}\\
  \caption{$a(y)$ as a function of $y$. $a^* = 0.5$ is the optimal point.}\label{fig:a}
 \end{figure}
 
\begin{figure}
  \centering
  \includegraphics[width=3.2 in]{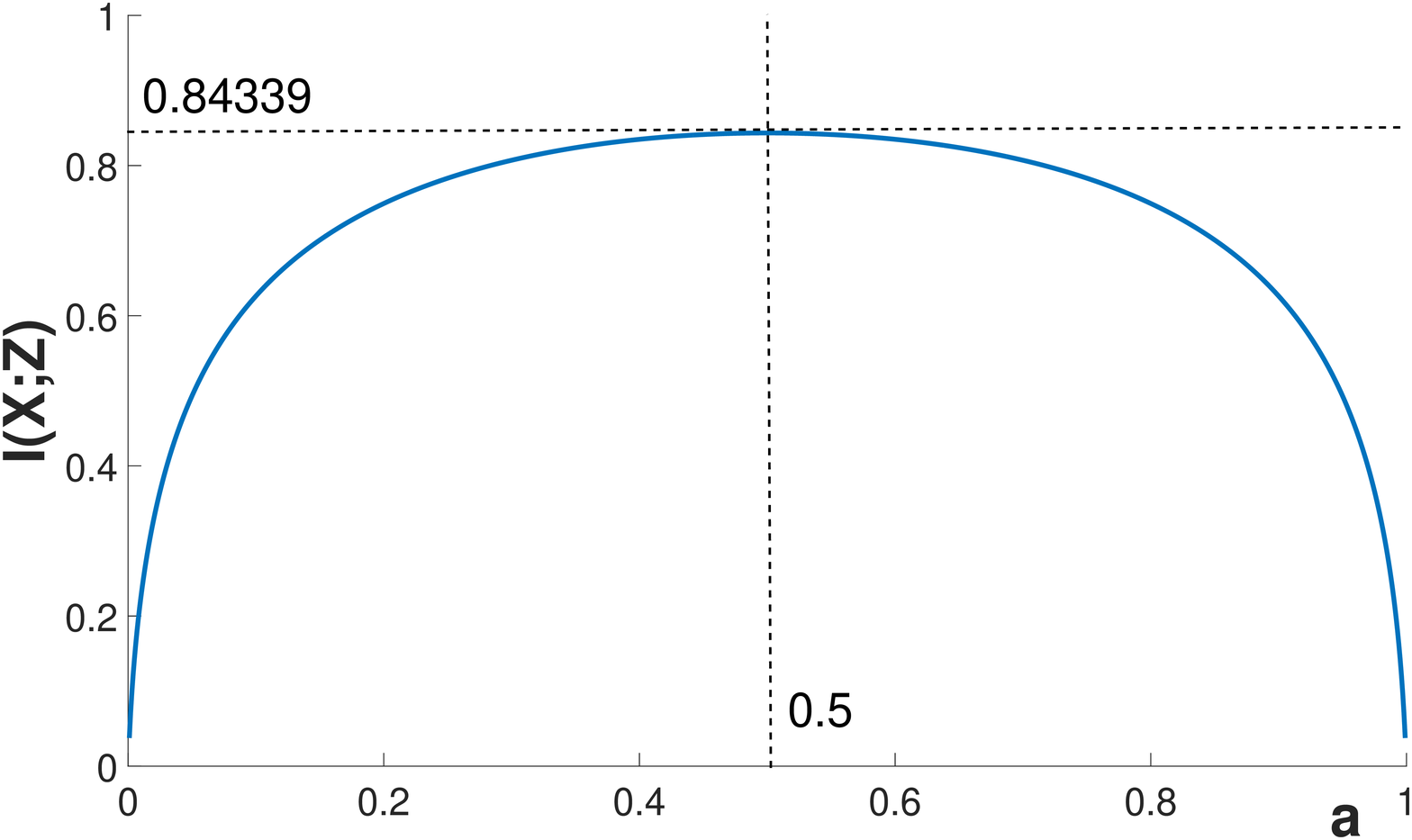}\\
  \caption{$I(X;Z)$ as a function of $a$. $a^* = 0.5$ is the optimal point that corresponds to the maximum $I(X;Z)_{a^*}=0.84339$ bits. }\label{fig:I}
 \end{figure}

\textbf{Remark:} We note that if $p_0=p_1=1/2$ and $\phi_0(y)$ and $\phi_1(y)$ are symmetry uni-modal and  i.i.d distributions, the optimal quantizer can be achieved using a single threshold $\textbf{h}^*$ which is the solution of $\phi_0(y)=\phi_1(y)$. 
For example, $\phi_0(y)=N(\mu_1,\sigma)$ and $\phi_1(y)=N(\mu_2,\sigma)$ are two normal distributions having the same variance $\sigma$, the optimal quantizer can be achieved using a single threshold $h_1^*=\dfrac{\mu_1+\mu_2}{2}$.  This is because  $h_1^*=\dfrac{\mu_1+\mu_2}{2}$ corresponding to $a^*=0.5$ and $f(a^*)=g(a^*)$ which leads to $F(a^*)=0$.  Example \ref{ex: 2} confirms our notation. 


\subsection{Unique Optimal Binary Quantizer}
\begin{exmp}
\label{ex: 3}
\end{exmp}
In this section, we show an example where the optimal quantizer requires more than one threshold.  Consider a channel having $p_0=0.5$, $p_1=0.5$ and $\phi_0(y)=N(\mu_0=-1,\sigma_0=\sqrt{5})$, $\phi_1(y)=N(\mu_1=1,\sigma_1=1)$. Effectively, this is the channel with two different Gaussian noises added to the inputs, depending whether the input $X=x_0 = -1$ or $X=x_1 = 1$.  If $X =x_0= -1$ then the noise is distributed as $N(\mu=0,\sigma=\sqrt{5})$.  If $X=x_1 = 1$ then the noise is distributed as $N(\mu=0,\sigma=1).$ Mathematically,

$$y = \begin{cases}
	x + n_0 & \text{  if  } X = x_0=-1, \\
	x + n_1 & \text{  if  } X = x_1=1,
	\end{cases}$$
	
where $n_0 \sim N(0,\sqrt{5})$ and $n_1 \sim N(0,1)$.

Fig. \ref{fig:r2} shows $r(y)=\frac{\phi_0(y)}{\phi_1(y)}$  as a function $y$. As seen,  $r(y)$ is not a monotonic function. Thus, an optimal quantizer might require multiple thresholds. Based on Theorem \ref{theorem: 3}, by solving $F(a^*) = 0$ we obtain $a^* = 0.412$.  Since $r(y)$ is not a monotonic function, $a(y)$ is also not a monotonic function as shown in Fig. \ref{fig:a2}.  As a result, there are two thresholds ($h^*_1 = -0.5374, h^*_2 = 3.5374$) correspond to the optimal $a^* = 0.412$ as indicated by the two intersections of the horizontal red line with the curve in Fig. \ref{fig:a2}. 
The mutual information $I(X;Z)$ as the function of $a$ is numerically plotted in Fig. \ref{fig:I2}. As seen, the optimal point is unique at $a^*=0.412$  which corresponds $I(X;Z)_{a^*}=0.40093$ bits.


 \begin{figure}
  \centering
  \includegraphics[width=3.2 in]{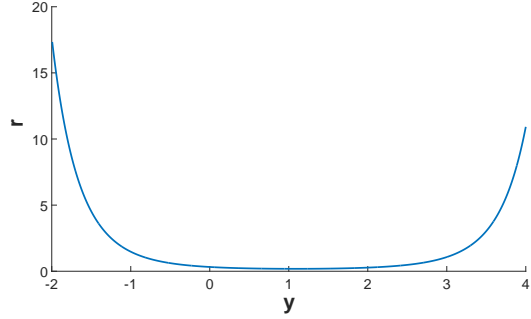}\\
  \caption{Optimal two-threshold quantizer: $r(y)$ as a non-monotonic function of $y$.}\label{fig:r2}
 \end{figure}

 \begin{figure}
  \centering
  \includegraphics[width=3.2 in]{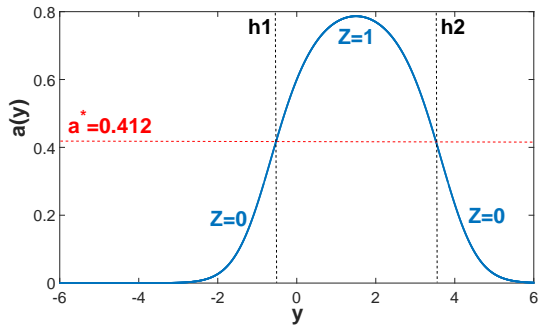}\\
  \caption{Optimal two-threshold quantizer: $a(y)$ as a non-monotonic function of $y$. There are two thresholds correspond to the optimal value $a^*=0.412$. }\label{fig:a2}
 \end{figure}
 
   \begin{figure}
  \centering
  \includegraphics[width=3.2 in]{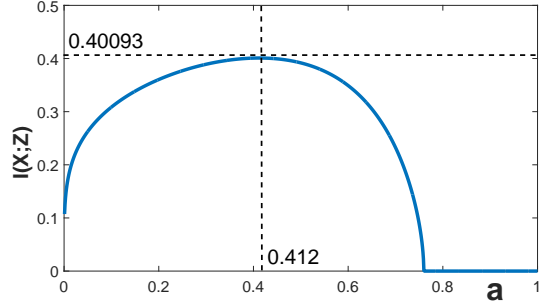}\\
  \caption{Mutual information $I(X;Z)_a$ as a function of $a$, with a unique optimal point $a^* = 0.412$.}\label{fig:I2}
 \end{figure}

\section{Conclusion}

In this paper, we show that there is a unique optimal binary quantizer with a single threshold  when the ratio of the channel conditional densities of the inputs $r(y) = \frac{P(Y=y|X=0)}{P(Y = y|X=1)}$ is a strictly increasing/decreasing function. 
Furthermore, we show that an optimal quantizer (possibly with multiple thresholds) is the one with the thresholding vector whose elements are all the solutions of $r(y)=r^*$ for some unique constant $r^*>0$. This uniqueness property of the optimal constant $r^*$ allows for fast algorithmic implementation such as a bisection algorithm to find the optimal quantizer.   We show numerical results for applying the proposed quantizer design technique for channels with additive Gaussian noises.

\appendix

\subsection{Proof for Lemma \ref{lemma: relate first derivation}}
\label{sec: proof lemma 1}

From (\ref{eq: relate phi0 phi1}), we have:
 \begin{equation}
 \label{eq: solving equation}
 \phi_1(h_i)=\dfrac{ap_0}{(1-a)p_1} \phi_0(h_i),   \forall i \in \{1,2,\dots,n \}.
 \end{equation} 
 
 From (\ref{eq: construct fa}) and (\ref{eq: construct ga})
 \begin{equation}
 \label{eq: fa derivation}
\frac{d f(a)}{d a}=\frac{\partial f(a)}{\partial h} \frac{\partial h}{\partial a}=+ \phi_0(h_1) \frac{\partial h_1}{\partial a}-\phi_0(h_2) \frac{\partial h_2}{\partial a}+ \dots - \phi_0(h_{n}) \frac{\partial h_n}{\partial a},
 \end{equation} 
  \begin{equation}
   \label{eq: ga derivation}
 \frac{d g(a)}{d a}=\frac{\partial g(a)}{\partial h} \frac{\partial h}{\partial a}=-\phi_1(h_1) \frac{\partial h_1}{\partial a}+\phi_1(h_2) \frac{\partial h_2}{\partial a}- \dots + \phi_1(h_{n})\frac{\partial h_n}{\partial a}.
 \end{equation}

Combining Eqs. (\ref{eq: solving equation}), (\ref{eq: fa derivation}) and (\ref{eq: ga derivation}), we have the desired proof. We note that $f'(a)$ and $g'(a)$ have the opposite sign.  As a result, if $f(a)$ increases, then $g(a)$ decreases and vice-versa. 
\begin{flushright} $\blacksquare$ \end{flushright}

\subsection{Proof for Lemma \ref{lemma: main result 4}}
\label{sec: proof lemma 2}

\textbf{(1)} From (\ref{eq: u-y}), $f(a)$ represents the quantized bit ``0" which is the area of $u(y)$ where $u(y) < a$.  Therefore, if $a$ is increasing, $f(a)$ is obviously increasing. Thus, $f'(a) >0$.  A similar proof can be established for $g(a)$ which corresponds to the area of $u(y)$ where $u(y) \geq a$.

\textbf{(2)}  We note that $f(a)$ and $g(a)$ represent the quantized bits ``0" and ``1" which correspond to the areas of $u(y) < a$ and $u(y) \geq a$, respectively. Let $\mathbb{H}_a=\{ y| u(y) < a \}$ and $\bar{\mathbb{H}}_a=\{ y| u(y) \geq a \}$. From (\ref{eq: u-y}) 
\begin{equation}
\label{eq: 53}
ap_0\phi_0(y) > (1-a)p_1 \phi_1(y), \forall y \in \mathbb{H}_a,
\end{equation}
\begin{equation}
\label{eq: 54}
ap_0\phi_0(y) \leq (1-a)p_1 \phi_1(y), \forall y \in \bar{\mathbb{H}}_a.
\end{equation}

We consider two possible cases: $a > p_1$ and $a \leq p_1$.  In both cases, we will show that $f(a) + g(a) \geq 1$.

$\bullet$ If $a \le p_1$ then $1-a \ge 1-p_1=p_0$. Thus, from (\ref{eq: 53}),  $\phi_0(y) \geq \phi_1(y)$ for $\forall$ $y \in \mathbb{H}_a$.
Therefore,
\begin{eqnarray}
f(a)+g(a)&=& \int_{y \in \mathbb{H}_a} \phi_0(y)dy + \int_{y \in \bar{\mathbb{H}}_a} \phi_1(y)dy\\
& \geq & \int_{y \in \mathbb{H}_a} \phi_1(y)dy + \int_{y \in \bar{\mathbb{H}}_a} \phi_1(y)dy\\
&=&1. \label{eq: 56}
\end{eqnarray}

$\bullet$ If $a > p_1$ then $1-a < 1- p_1 =p_0$. Thus, from (\ref{eq: 54}), $\phi_0(y) < \phi_1(y)$ for $\forall$ $y \in \bar{\mathbb{H}}_a$. 
Therefore,
\begin{eqnarray}
f(a)+g(a)&=& \int_{y \in \mathbb{H}_a} \phi_0(y)dy + \int_{y \in \bar{\mathbb{H}}_a} \phi_1(y)dy\\
& > & \int_{y \in \mathbb{H}_a} \phi_0(y)dy + \int_{y \in \bar{\mathbb{H}}_a} \phi_0(y)dy\\
&=&1. \label{eq: 55}
\end{eqnarray}

\begin{flushright} $\blacksquare$ \end{flushright}

\textbf{Remark:} From (\ref{eq: 56}), we  note that the inequality becomes equality if and only if $a=p_1$ and $\phi_0(y)=\phi_1(y)$ for $\forall$ $y \in {\mathbb{H}}_a$. 

\subsection{Proof for $A \geq B$}
\label{sec: proof theorem 4}

From the fact that $(f(a)+g(a)-1)^2 \geq  0$, with a bit of algebra, we have,
\begin{equation}
\label{eq: v05}
f(a)g(a) + (1-f(a))(1-g(a)) \geq f(a)(1-f(a))+ g(a)(1-g(a)).
\end{equation}

Thus, 
\begin{eqnarray}
A&=&(p_0f(a)+p_1(1-g(a)))(p_0(1-f(a))+p_1g(a)) \nonumber\\
&=&p_0^2f(a)(1-f(a)) + p_1^2g(a)(1-g(a))+ p_0p_1(f(a)g(a)+(1-f(a))(1-g(a))) \\
&\geq & p_0^2f(a)(1-f(a)) + p_1^2g(a)(1-g(a)) + p_0p_1(f(a)(1-f(a))+g(a)(1-g(a)))\label{eq: v06}\\
&=& p_0(p_0+p_1)f(a)(1-f(a))+ p_1(p_0+p_1)g(a)(1-g(a))\\
&=& p_0f(a)(1-f(a))+ p_1g(a)(1-g(a)) \label{eq: v07}\\
&=& B. \label{eq: make smaller}
\end{eqnarray}
with (\ref{eq: v06}) due to (\ref{eq: v05}) and (\ref{eq: v07}) due to $p_0+p_1=1$. Thus, $A \geq B$. \begin{flushright} $\blacksquare$ \end{flushright}

\subsection{Proof that $F(a)=0$ has only a single solution}
\label{apd: unique a fa+ga=0}

Since $F'(a) \ge 0$, $F(a)$ is a non-decreasing function. If $F(a) = 0$ has more than a single solution $a^*$, then it must have uncountable number of solutions  $a \in [a^*, a^* + \epsilon]$  for some $a^*$ and  $\epsilon > 0$. 
Now, since $F'(a)=H(a)+G(a)$ and $H(a) \geq 0$, $G(a) \geq 0$ for  $\forall$ $a$, a necessary condition for $F('a) = 0$ is that $H(a)=0$ for $\forall$ $a \in [a^*, a^* + \epsilon]$. However, from (\ref{eq:c}), $H(a)=0$ for $\forall$ $a \in [a^*, a^* + \epsilon]$ is equivalent to $f(a)+g(a)=1$ for $\forall$ for $a \in [a^*, a^* + \epsilon]$. This statement contradicts to our remark at the end of Appendix \ref{sec: proof lemma 2} such that $f(a)+g(a)=1$ if and only if $a=p_1$ and $\phi_0(y)=\phi_1(y)$ for $\forall$ $y \in {\mathbb{H}}_a$. Thus, there exists a single $a^*$ such that $F(a^*)=0$. \begin{flushright} $\blacksquare$ \end{flushright}
\bibliographystyle{unsrt}
\bibliography{sample}

\begin{thebibliography}{10}

\bibitem{lloyd1982least}
Stuart Lloyd.
\newblock Least squares quantization in pcm.
\newblock {\em IEEE transactions on information theory}, 28(2):129--137, 1982.

\bibitem{gersho2012vector}
Allen Gersho and Robert~M Gray.
\newblock {\em Vector quantization and signal compression}, volume 159.
\newblock Springer Science \& Business Media, 2012.

\bibitem{goldberg1986image}
Morris Goldberg, P~Boucher, and Seymour Shlien.
\newblock Image compression using adaptive vector quantization.
\newblock {\em IEEE Transactions on Communications}, 34(2):180--187, 1986.

\bibitem{gong2014compressing}
Yunchao Gong, Liu Liu, Ming Yang, and Lubomir Bourdev.
\newblock Compressing deep convolutional networks using vector quantization.
\newblock {\em arXiv preprint arXiv:1412.6115}, 2014.

\bibitem{akarun1997adaptive}
Lale Akarun, Y~Yardunci, and A~Enis Cetin.
\newblock Adaptive methods for dithering color images.
\newblock {\em IEEE transactions on image processing}, 6(7):950--955, 1997.

\bibitem{cover2012elements}
Thomas~M Cover and Joy~A Thomas.
\newblock {\em Elements of information theory}.
\newblock John Wiley \& Sons, 2012.

\bibitem{nguyen2018closed}
Thuan Nguyen and Thinh Nguyen.
\newblock On closed form capacities of discrete memoryless channels.
\newblock In {\em 2018 IEEE 87th Vehicular Technology Conference (VTC Spring)},
  pages 1--5. IEEE, 2018.

\bibitem{romero2015decoding}
Francisco Javier~Cuadros Romero and Brian~M Kurkoski.
\newblock Decoding ldpc codes with mutual information-maximizing lookup tables.
\newblock In {\em Information Theory (ISIT), 2015 IEEE International Symposium
  on}, pages 426--430. IEEE, 2015.

\bibitem{wang2011soft}
Jiadong Wang, Thomas Courtade, Hari Shankar, and Richard~D Wesel.
\newblock Soft information for ldpc decoding in flash: Mutual-information
  optimized quantization.
\newblock In {\em Global Telecommunications Conference (GLOBECOM 2011), 2011
  IEEE}, pages 1--6. IEEE, 2011.

\bibitem{tal2011construct}
Ido Tal and Alexander Vardy.
\newblock How to construct polar codes.
\newblock {\em arXiv preprint arXiv:1105.6164}, 2011.

\bibitem{max1960quantizing}
Joel Max.
\newblock Quantizing for minimum distortion.
\newblock {\em IRE Transactions on Information Theory}, 6(1):7--12, 1960.

\bibitem{smith1971information}
Joel~G Smith.
\newblock The information capacity of amplitude-and variance-constrained sclar
  gaussian channels.
\newblock {\em Information and Control}, 18(3):203--219, 1971.

\bibitem{kurkoski2014quantization}
Brian~M Kurkoski and Hideki Yagi.
\newblock Quantization of binary-input discrete memoryless channels.
\newblock {\em IEEE Transactions on Information Theory}, 60(8):4544--4552,
  2014.

\bibitem{mathar2013threshold}
Rudolf Mathar and Meik D{\"o}rpinghaus.
\newblock Threshold optimization for capacity-achieving discrete input one-bit
  output quantization.
\newblock In {\em Information Theory Proceedings (ISIT), 2013 IEEE
  International Symposium on}, pages 1999--2003. IEEE, 2013.

\bibitem{sakai2014suboptimal}
Yuta Sakai and Ken-ichi Iwata.
\newblock Suboptimal quantizer design for outputs of discrete memoryless
  channels with a finite-input alphabet.
\newblock In {\em Information Theory and its Applications (ISITA), 2014
  International Symposium on}, pages 120--124. IEEE, 2014.

\bibitem{iwata2014quantizer}
Ken-ichi Iwata and Shin-ya Ozawa.
\newblock Quantizer design for outputs of binary-input discrete memoryless
  channels using smawk algorithm.
\newblock In {\em Information Theory (ISIT), 2014 IEEE International Symposium
  on}, pages 191--195. IEEE, 2014.

\bibitem{winkelbauer2013channel}
Andreas Winkelbauer, Gerald Matz, and Andreas Burg.
\newblock Channel-optimized vector quantization with mutual information as
  fidelity criterion.
\newblock In {\em Signals, Systems and Computers, 2013 Asilomar Conference on},
  pages 851--855. IEEE, 2013.

\bibitem{koch2013low}
Tobias Koch and Amos Lapidoth.
\newblock At low snr, asymmetric quantizers are better.
\newblock {\em IEEE Trans. Information Theory}, 59(9):5421--5445, 2013.

\bibitem{DBLP:journals/corr/abs-1901-01659}
Xuan He, Kui Cai, Wentu Song, and Zhen Mei.
\newblock Dynamic programming for discrete memoryless channel quantization.
\newblock {\em CoRR}, abs/1901.01659, 2019.

\bibitem{burshtein1992minimum}
David Burshtein, Vincent Della~Pietra, Dimitri Kanevsky, and Arthur Nadas.
\newblock Minimum impurity partitions.
\newblock {\em The Annals of Statistics}, pages 1637--1646, 1992.

\bibitem{kurkoski2017single}
Brian~M Kurkoski and Hideki Yagi.
\newblock Single-bit quantization of binary-input, continuous-output channels.
\newblock In {\em Information Theory (ISIT), 2017 IEEE International Symposium
  on}, pages 2088--2092. IEEE, 2017.

\bibitem{coppersmith1999partitioning}
Don Coppersmith, Se~June Hong, and Jonathan~RM Hosking.
\newblock Partitioning nominal attributes in decision trees.
\newblock {\em Data Mining and Knowledge Discovery}, 3(2):197--217, 1999.

\bibitem{zhang2016low}
Jiuyang~Alan Zhang and Brian~M Kurkoski.
\newblock Low-complexity quantization of discrete memoryless channels.
\newblock In {\em 2016 International Symposium on Information Theory and Its
  Applications (ISITA)}, pages 448--452. IEEE, 2016.

\bibitem{nazer2017information}
Bobak Nazer, Or~Ordentlich, and Yury Polyanskiy.
\newblock Information-distilling quantizers.
\newblock In {\em 2017 IEEE International Symposium on Information Theory
  (ISIT)}, pages 96--100. IEEE, 2017.

\bibitem{laber2018binary}
Eduardo~S Laber, Marco Molinaro, and Felipe A~Mello Pereira.
\newblock Binary partitions with approximate minimum impurity.
\newblock In {\em International Conference on Machine Learning}, pages
  2860--2868, 2018.

\bibitem{cicalese2019new}
Ferdinando Cicalese, Eduardo Laber, and Lucas Murtinho.
\newblock New results on information theoretic clustering.
\newblock In {\em International Conference on Machine Learning}, pages
  1242--1251, 2019.

\bibitem{nguyen2018capacities}
Thuan Nguyen, Yu-Jung Chu, and Thinh Nguyen.
\newblock On the capacities of discrete memoryless thresholding channels.
\newblock In {\em 2018 IEEE 87th Vehicular Technology Conference (VTC Spring)},
  pages 1--5. IEEE, 2018.

\bibitem{alirezaei2015optimum}
Gholamreza Alirezaei and Rudolf Mathar.
\newblock Optimum one-bit quantization.
\newblock In {\em Information Theory Workshop-Fall (ITW), 2015 IEEE}, pages
  357--361. IEEE, 2015.

\bibitem{singh2009limits}
Jaspreet Singh, Onkar Dabeer, and Upamanyu Madhow.
\newblock On the limits of communication with low-precision analog-to-digital
  conversion at the receiver.
\newblock {\em IEEE Transactions on Communications}, 57(12):3629--3639, 2009.

\bibitem{mumey2003optimal}
Brendan Mumey and Tom{\'a}{\v{s}} Gedeon.
\newblock Optimal mutual information quantization is np-complete.
\newblock In {\em Neural Information Coding (NIC) workshop poster, Snowbird
  UT}, pages 1932--4553, 2003.

\bibitem{boyd2004convex}
Stephen Boyd and Lieven Vandenberghe.
\newblock {\em Convex optimization}.
\newblock Cambridge university press, 2004.

\bibitem{conte2017elementary}
Samuel~Daniel Conte and Carl De~Boor.
\newblock {\em Elementary numerical analysis: an algorithmic approach},
  volume~78.
\newblock SIAM, 2017.

\end{thebibliography}

\end{document}